\def\dnsbibitem{\vspace{-5pt}\bibitem}
\def\EQ{\mbox{\tt EQ}}
\def\DOM{\mbox{\tt DOM\_Part}}
\def\LAB{\mbox{\tt MAX\_Label}}
\def\ID{\mbox{\tt ID}}
\def\cP{\mathcal{P}}
\def\cD{\mathcal{D}}
\def\cE{\mathcal{E}}
\def\cH{\mathcal{H}}
\def\cA{\mathcal{A}}
\def\cD{\mathcal{D}}
\def\cC{\mathcal{C}}
\def\cJ{\mathcal{J}}
\def\cF{\mathcal{F}}
\def\hC{\hat{\mathcal{C}}}
\def\tO{{\tilde{O}}}
\def\INPUT{\bar{X}^s,\bar{X}^r}
\def\INPUTPrime{\bar{X'}^s,\bar{X'}^r}
\def\INPUTS{\bar{X}^s}
\def\INPUTR{\bar{X}^r}
\def\indVarS{\bar{X}^s}
\def\indVarR{\bar{X}^r}
\def\partyA{P_A}
\def\partyB{P_B}
\def\sminus{\smallsetminus}
\newtheorem{theorem}{Theorem}[section]
\newtheorem{corollary}[theorem]{Corollary}
\newtheorem{lemma}[theorem]{Lemma}
\newtheorem{observation}[theorem]{Observation}
\newtheorem{claim}[theorem]{Claim}
\newcommand{\qed}{\rule{7pt}{7pt}}
\newenvironment{proof}{\noindent{\bf Proof}\hspace*{1em}}{\qed\bigskip}
\newenvironment{proof-sketch}{\noindent{\bf Sketch of Proof}
\hspace*{1em}}{\qed\bigskip}
\newenvironment{proof-idea}{\noindent{\bf Proof Idea}
\hspace*{1em}}{\qed\bigskip}
\newenvironment{proof-of-lemma}[1]{\noindent{\bf Proof of Lemma #1}
\hspace*{1em}}{\qed\bigskip}
\newenvironment{sketch-of-claim}[1]{\noindent{\bf Proof Sketch of Claim #1}
\hspace*{1em}}{\qed\bigskip}
\newenvironment{proof-of-claim}[1]{\noindent{\bf Proof of Claim #1}
\hspace*{1em}}{\qed\bigskip}
\newenvironment{proof-claim}{\noindent{\bf Proof of Claim}
\hspace*{1em}}{\qed\bigskip}
\newenvironment{proof-lemma}{\noindent{\bf Proof of Lemma}
\hspace*{1em}}{\qed\bigskip}
\newenvironment{sketch-of-lemma}[1]{\noindent{\bf Proof Sketch of Lemma #1}
\hspace*{1em}}{\qed\bigskip}
\title{{\bf\Large Tight Bounds For Distributed Minimum-weight Spanning Tree
Verification}}
\date{\today}
\author{
Liah Kor
\thanks{Department of Computer Science and Applied Mathematics, The Weizmann
Institute of Science, Rehovot, 76100 Israel. 
E-mail: {\tt \{liah.kor,david.peleg\}@weizmann.ac.il}.
Supported by a grant from the United States-Israel Binational Science 
Foundation (BSF).}
\and
Amos Korman
\thanks{CNRS and LIAFA, Univ. Paris 7, Paris, France. 
E-mail: {\tt amos.korman@liafa.univ-paris-diderot.fr}. 
Supported by the ANR project DISPLEXITY, 
and by the INRIA project GANG.}
\and 
David Peleg~$^*$
}
\date{}
\begin{document}

\maketitle

\begin{abstract}
This paper introduces the notion of distributed verification without 
preprocessing. It focuses on the Minimum-weight Spanning Tree (MST) 
verification problem and establishes tight upper and lower bounds for 
the time and message complexities of this problem. Specifically, we
provide an MST verification algorithm that achieves {\em simultaneously} 
$\tilde{O}(m)$ messages and $\tilde{O}(\sqrt{n} + D)$ time, where $m$ is 
the number of edges in the given graph $G$, $n$ is the number of nodes, 
and $D$ is $G$'s diameter. On the other hand, we show that any MST 
verification algorithm must send $\tilde{\Omega}(m)$ messages and incur 
$\tilde{\Omega}(\sqrt{n} + D)$ time in worst case.

Our upper bound result appears to indicate that the verification of an MST may be easier than 
its construction, since for MST construction, both lower bounds of
$\tilde{\Omega}(m)$ messages and $\tilde{\Omega}(\sqrt{n} + D)$ time hold, but at the moment 
there is no known distributed algorithm that constructs an MST and  achieves 
{\em simultaneously} $\tilde{O}(m)$ messages and $\tilde{O}(\sqrt{n} + D)$ time.
Specifically, the best known time-optimal algorithm (using $\tO(\sqrt{n} + D)$ 
time) requires $O(m+n^{3/2})$ messages, and the best known message-optimal 
algorithm (using $\tO(m)$ messages) requires $O(n)$ time.
On the other hand, our lower bound results indicate that the verification of an MST is not significantly easier than its construction.
\end{abstract}

%\noindent {\bf General Terms:} Algorithms, Reliability, Theory,
%Verification.\\
%\noindent
%{\bf Categories \& Subject Descriptors:}\\
%C.2.2 [Network Protocols]: Protocol verification; 
%\\
%C.2.4
%[Distributed Systems];
%\\
%G.2.2[Discrete Mathematics]: Graph Theory: Graph algorithms, Graph
%labeling, Trees;
%    \\
%B.8.1[Performance and Reliability]: Reliability, Testing, and
%Fault-Tolerance;
%\\

\noindent {\bf keywords:} Distributed algorithms, distributed verification, labeling schemes, minimum-weight spanning tree.

\section {Introduction}

\subsection{Background and Motivation}
The problem of computing a Minimum-weight Spanning Tree (MST)  received considerable attention in both the 
distributed setting as well as in the  centralized setting. In the distributed setting, constructing such a tree distributively
requires a collaborative computational effort by all the network vertices, 
and involves sending messages to remote vertices and waiting for their replies. This is particularly interesting in the $\cal{CONGEST}$ model of computation, where due to congestion constrains, each message can encode only a limited number of bits, specifically, this number is typically assumed to be $O(\log n)$, where $n$ denotes the number of nodes in the network.
The main measures considered for evaluating a distributed MST protocol are the {\em message} complexity, namely, the maximum 
number of messages sent in the worst case scenario, 
and the {\em time} complexity, namely, the maximum number of communication rounds 
required for the protocol's execution in the worst case scenario.
The line of research on the distributed MST computation problem was initiated 
by the seminal work of Gallager, Humblet, and Spira \cite{GHS_83} 
and culminated in the $O(n)$ time and $O(m+n\log n)$ messages algorithm 
by Awerbuch \cite{A87}, where $m$ denotes the number of edges in the network. 
Both of these algorithms apply also to asynchronous systems.
As pointed out in \cite{A87}, the results of 
\cite{AGVP_90,B_80,FL_84} establish an $\Omega(m+n\log n)$ lower bound 
on the number of messages required to construct a MST.  
Thus, the algorithm of \cite{A87} is essentially optimal. 

This was the state of affairs until the mid-nineties when Garay, Kutten, 
and Peleg \cite{GKP98} initiated the analysis of the time complexity 
of MST construction as a function of additional parameters (other than $n$), 
and gave the first sublinear time distributed algorithm for the MST problem, 
running in time $O(D+n^{0.614})$, where $D$ is the diameter of the network. 
This result was later improved to $O(D + \sqrt{n} \log^* n)$ by 
Kutten and Peleg \cite{KP_98}.
The tightness of this latter bound was shown by  
Peleg and Rubinovich \cite{PR_00} who proved that 
$\tilde{\Omega}(\sqrt{n})$ is essentially\footnote{$\tilde{\Omega}$ 
(respectively $\tilde{O}$) is a relaxed variant of the $\Omega$ (rep., $O$) 
notation that ignores polylog factors.}
a lower bound on the time for constructing MST on graphs with
diameter $\Omega(\log n)$. This result was complemented by the work 
of Lotker,  Patt-Shamir and  Peleg \cite{LPP_01} that showed an 
$\tilde{\Omega}(\sqrt[3]{n})$ lower bound on the time required for 
MST construction on graphs with small diameter. 
Note, however, that the time-efficient algorithms of \cite{GKP98,KP_98} 
are not message-optimal, i.e., they take asymptotically much more than 
$O(m+n \log n)$ messages. For example, the time-optimal protocol 
of \cite{KP_98} requires sending $O(m + n^{3/2})$ messages. 
The question of whether there exists an optimal distributed algorithm for MST 
construction that achieves {\em simultaneously} $\tilde{O}(m)$ messages 
and $\tilde{O}(\sqrt{n} + D)$ time still remains open.

This paper addresses the MST verification problem in the $\cal{CONGEST}$ model. 
Informally, the setting we consider is as follows. 
A subgraph is given in a distributed manner, namely, some of 
the edges incident to every vertex are locally marked, and the collection of 
marked edges at all the vertices defines a {\em marked subgraph}; see, e.g., 
\cite{CGKK95,GHS_83,KK07,KKP10}. The verification task requires checking 
distributively whether the marked subgraph is indeed an MST of the given graph. 
Similarly to the papers dealing with sub-linear time MST construction 
\cite{GKP98,KP_98}, we consider a synchronous environment.

\subsection{Our Results}

We consider the MST verification problem and establish asymptotically tight 
upper and lower bounds for the time and message complexities of this problem.
Specifically, in the positive direction we show the following:

\begin{theorem}
\label{thm:upper-bound}
There exists a distributed MST verification algorithm that uses
$\tO(\sqrt{n} + D)$ time and $\tO(m)$ messages.
\end{theorem}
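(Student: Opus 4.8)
The plan is to reduce MST verification to two checks and then to invest essentially all the effort in the second. First I would verify that the marked subgraph $T$ is a spanning tree of $G$, and then, assuming it is, verify the classical \emph{cycle property}: a spanning tree $T$ is a minimum-weight spanning tree if and only if every non-tree edge $e=(u,v)$ satisfies $w(e)\ge \max\{w(f): f\in \pi_T(u,v)\}$, where $\pi_T(u,v)$ denotes the tree path connecting $u$ and $v$. A crucial observation that shapes the whole design is that the \emph{tree-diameter} of $T$ may be far larger than the time budget $\tO(\sqrt{n}+D)$: for instance, if $G=K_n$ (so $D=1$) and $T$ is a Hamiltonian path, then $T$ has tree-diameter $n-1\gg \sqrt{n}+D$. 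Hence the algorithm must never communicate along $T$ over its full extent; all long-range aggregation and broadcast must instead be routed through a BFS tree of $G$, whose depth is at most $D$ and which is built in $\tO(D)$ time and $\tO(m)$ messages. In particular the spanning-tree check (counting that the marked edges number exactly $n-1$, and that they are acyclic and span $V$) is performed by convergecasts over the BFS tree of $G$ together with the decomposition below, rather than by traversing $T$ itself.

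The heart of the algorithm is checking the cycle property for all (up to $m$) non-tree edges at once; a per-edge traversal of $\pi_T(u,v)$ is hopeless since such paths may have length $\Theta(n)$. Instead, the plan is to equip every vertex with a short \emph{path-maximum label} of $O(\log^2 n)$ bits, so that from the labels of $u$ and $v$ alone one can recover $\max\{w(f): f\in\pi_T(u,v)\}$; such labeling schemes for tree path-maxima are known to exist (via a heavy-path or separator decomposition of $T$). Once the labels are in place, each endpoint of a non-tree edge sends its label across the edge and locally compares $w(e)$ against the computed path maximum, reporting any violation up the BFS tree. This final phase costs only $O(\log n)$ messages per edge, hence $\tO(m)$ messages and $\tO(D)$ time overall, so the entire difficulty is pushed into \emph{producing and distributing the labels}.

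To compute the labels within the target bounds I would use the fragment/skeleton decomposition that underlies time-efficient MST algorithms. Partition $T$ into $O(\sqrt{n})$ connected fragments each of tree-diameter $O(\sqrt{n})$. Inside each fragment, intra-fragment path-maximum labels relative to a fragment root, together with the maximum weight on each internal root-to-boundary path, are computed by fragment-local convergecasts and broadcasts in $\tO(\sqrt{n})$ time (the relevant cost is the fragment's tree-diameter, not its size). Contracting each fragment to a node yields a \emph{skeleton} tree with only $O(\sqrt{n})$ nodes; its structure (inter-fragment edges, their weights, and the relevant fragment-internal maxima) is pipelined up the BFS tree of $G$ to a coordinator in $\tO(\sqrt{n}+D)$ time, the coordinator computes global path-maximum labels for the skeleton, and these are broadcast back down in $\tO(\sqrt{n}+D)$ time. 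Each vertex then forms its final label by concatenating its intra-fragment label with the global label of its fragment, and a path maximum between $u$ and $v$ is recovered by combining the two endpoints' vertical intra-fragment contributions with the skeleton contribution.

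The hard part will be threefold. First, obtaining a diameter-bounded decomposition of an \emph{arbitrary} given $T$ in $\tO(\sqrt{n}+D)$ time is itself delicate precisely because $T$'s tree-diameter may be enormous: the cuts defining the fragments must be placed in parallel rather than by sweeping along $T$, again using the BFS tree of $G$ for coordination. Second, the pipelining of the $O(\sqrt{n})$ skeleton items up and down a depth-$\Theta(D)$ BFS tree must respect the $O(\log n)$-bit bandwidth per edge per round, which is exactly what yields additive time $\sqrt{n}+D$ rather than multiplicative $\sqrt{n}\cdot D$; this requires scheduling the convergecast so items flow without collisions. Third, I must verify that the stitched label unambiguously reconstructs the path maximum across fragment boundaries, with each tree edge counted exactly once, while keeping the label length at $O(\log^2 n)$ so that the label exchange over all $m$ edges stays within the $\tO(m)$ message budget.
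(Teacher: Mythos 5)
Your proposal takes a genuinely different route from the paper's. You verify the classical cycle property directly on the candidate $T$, using composite path-maximum labels built from a decomposition of $T$ itself. The paper never builds labels for $T$ at all: it runs the controlled-GHS procedure (Procedure $\DOM$ of \cite{KP_98}) on $G$ with parameter $k=\Theta(\sqrt{n}+D)$, producing fragments of the \emph{true} MST (unique after careful tie-breaking); the candidate must then contain exactly those fragment edges, which every vertex checks locally, and all that remains is to verify that the inter-fragment part $T_{\cF}$ is an MST of the fragment graph --- a graph on $f=O(\min\{\sqrt{n},n/D\})$ nodes, small enough to ship to the root, label centrally with the scheme of \cite{KKKP_05}, and check via cycle-heaviness of inter-fragment non-tree edges only (correctness rests on a weight/uniqueness argument, Claim \ref{clm_fragmet_mst}, not on the full cycle property of $T$). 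This sidesteps every difficulty you flag: no intra-fragment labels, no stitching across fragment boundaries. One genuine advantage of your route, if it were completed, is that the cycle property with ``$\ge$'' tolerates ties, so you would not need the paper's delicate modified weight function $\omega'$ involving the indicator variables.

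As written, however, your plan has three concrete gaps. (1) \emph{Parameters:} with $\Theta(\sqrt{n})$ fragments, upcasting the skeleton over a depth-$\Theta(D)$ BFS tree costs $\Theta(\sqrt{n}\cdot D)$ messages, which is not $\tilde{O}(m)$ when $D\gg\sqrt{n}$; on the $n$-cycle ($m=n$, $D=\Theta(n)$) this is $\Theta(n^{3/2})$. You must take fragments of size at least $k=\Theta(\sqrt{n}+D)$, so that $f\cdot D\le (n/k)D=O(n)$ --- exactly the paper's choice of $k=\max\{\sqrt{n},d\}$. (2) \emph{Intra-fragment labels:} ``path-maximum labels relative to a fragment root'' plus root-to-boundary maxima do not determine the path maximum between two vertices of the same fragment. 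If $u$ and $v$ lie below a common child of the root, the root-relative maxima include edges above their LCA that are not on the $u$--$v$ path, so the decoder can only over-estimate, and a correct MST gets rejected whenever some intra-fragment non-tree edge is lighter than an off-path edge above the LCA. You need a genuine path-maximum labeling scheme (heavy-path or centroid decomposition) computed \emph{distributively} inside each fragment; note that only the fragment's diameter is $O(k)$, not its size (think of a star), so the fragment leader cannot simply collect the fragment and run the centralized encoder. (3) \emph{Stitching:} for an inter-fragment non-tree edge $(u,v)$, vertex $u$ cannot know through which boundary node of its fragment the tree path toward $v$ exits, so its label must carry an entry (intra-fragment maximum plus skeleton label) for \emph{every} boundary node of $F(u)$; a fragment may have $\Theta(f)$ distinct boundary nodes, blowing labels up to $\tilde{\Theta}(f)$ bits and the final label exchange up to $\tilde{\Theta}(mf)$ messages. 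Keeping labels at $O(\log^2 n)$ bits requires a decomposition guaranteeing $O(1)$ boundary nodes per fragment (top-tree-style clusters), an additional nontrivial distributed construction that your proposal does not supply. Gaps (2) and (3) are repairable in principle, but they constitute the actual technical content of your approach, and the proposal leaves them open.
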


This result appears to indicate that MST verification may be easier than 
MST construction, since, at the moment,  it is not known whether there exists an algorithm that constructs an MST simultaneously in $\tO(\sqrt{n} + D)$ time and $\tO(m)$ messages. 
Conversely, we show that the verification problem 
is not much easier than the construction, by proving that the known lower bounds for MST construction
also hold for the verification problem. Specifically, we prove the following two matching lower bounds.

\begin{theorem} 
\label{thm:lower-bound-msgs}
Any distributed algorithm for 
MST verification 
requires ${\Omega}(m)$ messages.
\end{theorem}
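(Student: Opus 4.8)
The plan is to prove the bound in the standard \emph{clean-network} model, in which a vertex initially knows only its own identity, the weights and marked-status of its incident edges, and its port numbers, and learns the identity of a neighbour \emph{only} upon receiving a message across the corresponding edge. This assumption is essential for an $\Omega(m)$ message bound (if neighbours' identities came for free the bound could be beaten), and it is the standard model for message lower bounds of this type. The whole proof will be a single indistinguishability argument built around one dense, low-diameter YES-instance in which almost every non-tree edge is forced to carry a message.

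First I would construct the hard instance $G$ on $n$ vertices. The marked subgraph is a Hamiltonian path $v_1 - v_2 - \cdots - v_n$, which will be its unique MST. Split the vertices into $L=\{v_1,\dots,v_{n/2}\}$ and $R=\{v_{n/2+1},\dots,v_n\}$, give the single path edge $(v_{n/2},v_{n/2+1})$ joining $L$ to $R$ weight $10$, and give every other path edge weight $1$. Add, as unmarked non-tree edges, every pair inside $L$ and every pair inside $R$, each of weight $5$, and add no $L$--$R$ edge other than the heavy path edge. Then $m=\Theta(n^2)$ and the diameter is constant. Every weight-$5$ edge lies on a fundamental cycle whose tree edges all have weight $1$, so the path is genuinely an MST; hence $G$ is a YES-instance and every vertex must output ``yes''.

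The heart of the argument is an invisible rewiring. Suppose, for contradiction, that on $G$ the algorithm leaves some intra-$L$ edge $e_1=(a,b)$ \emph{and} some intra-$R$ edge $e_2=(c,d)$ \emph{silent}, i.e.\ no message ever crosses them (and, by choosing these edges away from $v_{n/2},v_{n/2+1}$, all four endpoints are distinct from the midpoint vertices). Form $G'$ by deleting $e_1,e_2$ and inserting the weight-$5$ unmarked edges $(a,d)$ and $(c,b)$, reusing the same ports at $a,b,c,d$; this preserves every degree and keeps the graph simple. Because all the edges involved have weight $5$ and the endpoints never communicated across the deleted edges, no vertex's local view differs between $G$ and $G'$; a round-by-round induction then shows that every vertex sends exactly the same messages in $G'$ as in $G$ (in particular the swapped edges remain silent), so every vertex produces the same output, namely ``yes''. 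However, $(a,d)$ now joins $L$ to $R$, so its fundamental cycle contains the weight-$10$ edge, and replacing that edge by $(a,d)$ yields a lighter spanning tree; thus in $G'$ the marked path is \emph{not} an MST. So $G'$ is a NO-instance on which every vertex wrongly answers ``yes'', contradicting correctness.

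Consequently no correct algorithm can leave both an intra-$L$ and an intra-$R$ non-tree edge silent, so on at least one of the two sides \emph{all} $\binom{n/2}{2}$ non-tree edges carry a message; since a message traverses a single edge, the total number of messages is at least the number of such edges, forcing $\Omega(n^2)=\Omega(m)$ messages and proving the theorem. The step needing the most care is the claim that $G$ and $G'$ induce identical executions: this is exactly where the clean-network assumption and the uniform weight $5$ on all rewirable edges are used, and it must be discharged by the inductive argument above, showing that indistinguishability is preserved precisely because the swapped edges never carry any traffic.
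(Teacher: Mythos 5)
Your proof is correct, and it rests on the same core mechanism as the paper's---finding non-tree edges over which no message is sent, rewiring them without changing any vertex's local view (which requires exactly the clean-network assumption you flag, also used by the paper), and concluding via a round-by-round indistinguishability induction---but it packages that mechanism genuinely differently. The paper never argues about weights: it proves the bound for the weaker \emph{spanning tree} verification problem (which immediately implies the MST bound), takes an \emph{arbitrary} graph $G$ with spanning tree $T$, runs the protocol on two disjoint, distinctly-labeled copies $G^2 = G \cup G'$ with candidate $T^2$, finds a silent non-tree edge in each copy, and cross-wires the two copies into a connected graph $G^X$; there the candidate $T^2$ fails to be a spanning tree for reasons of \emph{connectivity} (the copies are joined only by the swapped, non-candidate edges), yet by similarity of executions every vertex still answers as it does on $G^2$, i.e., accepts. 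Your version instead works inside a single connected graph and uses the weights essentially: the rewired edge jumps the weighted cut and undercuts the weight-$10$ bridge, so the candidate remains a spanning tree but loses \emph{minimality}. What your route buys: the hard instances are connected (the paper's doubled instances are disconnected and need its footnoted extension of the problem to spanning forests), the argument targets MST verification directly, and one argument covers both the simultaneous-wakeup and single-source settings (the paper handles single source by a separate edge-subdivision trick). What the paper's route buys: since $G$ is arbitrary, it yields $\Omega(m)$ instances for \emph{every} relation between $m$ and $n$, and it shows hardness even for unweighted spanning-tree verification. Your construction as written gives the bound only in the dense regime $m=\Theta(n^2)$; to match the theorem across all densities you should replace the two complete graphs by arbitrary graphs with $\Theta(m)$ weight-$5$ edges on each side (your swap and counting arguments go through verbatim)---a small but necessary patch.
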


\begin{theorem} 
\label{thm:lower-bound-time}
Any distributed algorithm for 
MST verification 
requires $\tilde{\Omega}(\sqrt{n} + D)$ time.
\end{theorem}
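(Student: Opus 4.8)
The plan is to establish the two summands $\tilde{\Omega}(\sqrt{n})$ and $\Omega(D)$ separately and combine them via $\max(a,b)=\Theta(a+b)$. The $\Omega(D)$ term is the easier one. I would fix a graph $G$ of diameter $D$ — say an even cycle with tunable weights, whose unique MST $T$ is obtained by deleting the heaviest edge — and place two adjustable edges $e_1,e_2$ at graph-distance $\Theta(D)$ from one another. By choosing their weights I can make the marked tree $T$ either a valid MST or not, where the two instances differ only in the weight of $e_1$. A standard indistinguishability argument then applies: in any execution running for fewer than, say, $D/3$ rounds, no node within distance $D/3$ of $e_2$ can distinguish the YES-instance from the NO-instance, since information about $e_1$'s weight cannot reach it in time, so some node necessarily outputs the wrong verdict.

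For the $\tilde{\Omega}(\sqrt{n})$ term I would reduce from the two-party communication complexity of set disjointness $\mathrm{DISJ}_b$, whose complexity is $\Omega(b)$. I would reuse the lower-bound graph family of Peleg and Rubinovich \cite{PR_00}: a graph $G$ on $\Theta(n)$ vertices built from $\Theta(\sqrt{n})$ vertex-disjoint paths, each of length $\Theta(\sqrt{n})$, together with a low-depth ``highway'' joining corresponding path-vertices so that the diameter is $O(\log n)$, plus two distinguished vertices. I would partition $V(G)$ into a part $V_A$ simulated by $\partyA$ and a part $V_B$ simulated by $\partyB$ so that (i) the number $C$ of edges crossing the cut is $\Theta(\sqrt{n})$, and (ii) the distinguished vertex whose incident weights encode $\partyA$'s input lies at distance $\Theta(\sqrt{n})$ from the cut along the slow paths. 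I would take $b=\Theta(n)$, letting $\partyA$ and $\partyB$ each hold one input bit per path-edge on its side.

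The heart of the reduction is a verification gadget built from the cut/cycle (red-rule) characterization of an MST: a spanning tree $T$ is an MST iff every non-tree edge is a heaviest edge on its fundamental cycle. I would hand the algorithm a fixed candidate tree $T$ and a weight assignment in which, for each index $i$, a designated non-tree ``test edge'' $e_i$ has a fundamental cycle that crosses the cut and whose two critical edges are the $i$-th edge of $\partyA$'s side and the $i$-th edge of $\partyB$'s side, with weights set so that $e_i$ violates the red rule exactly when $x_i=y_i=1$. Then $T$ is a valid MST precisely when $\mathrm{DISJ}_b(x,y)=1$, so any verification algorithm decides disjointness. Finally I would invoke the simulation lemma: a $t$-round algorithm can be run by $\partyA$ and $\partyB$ while exchanging only the at most $C$ messages that cross the cut in each round, for a total of $O(t\cdot C\cdot\log n)$ bits; comparing with the $\Omega(b)=\Omega(n)$ lower bound yields $t=\Omega\!\big(n/(C\log n)\big)=\tilde{\Omega}(\sqrt{n})$.

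The step I expect to be the main obstacle is designing the verification gadget so that deciding MST-validity genuinely requires the cross-cut communication, rather than being short-circuited by information that the given candidate tree $T$ reveals for free. Because the algorithm is handed the answer-candidate (unlike in construction), I must ensure that each individual test $e_i$ looks locally ``borderline,'' so that no node can accept or reject without comparing $\partyA$'s bit against $\partyB$'s bit on the far side of the cut, and that the highway responsible for the small diameter cannot be exploited to transport these $\Theta(n)$ bits across the cut in $o(\sqrt{n})$ rounds. Simultaneously realizing all $\Theta(n)$ fundamental cycles with a single tree $T$ while keeping the cut size $C=\Theta(\sqrt{n})$ in the presence of the highway, exactly as in \cite{PR_00}, is the delicate point, together with checking that the gadget weights admit an encoding consistent with the $\cal{CONGEST}$ bandwidth bound.
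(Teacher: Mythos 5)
Your $\Omega(D)$ argument matches the paper's and is fine. The fatal gap is in the $\tilde{\Omega}(\sqrt{n})$ part: the \emph{static cut-simulation} you propose cannot prove any nontrivial bound for MST verification, no matter how cleverly the gadget is designed, so the obstacle you flag in your last paragraph is not merely ``delicate'' --- it is insurmountable. In your setup the topology and the candidate tree $T$ are fixed and only edge weights depend on the inputs. Then $T$ is an MST iff every non-tree edge is at least as heavy as every tree edge on its fundamental cycle, so the only information about $P_A$'s weights that $P_B$ ever needs is, for each fundamental cycle, the maximum weight of an $A$-side tree edge on it. These maxima are compressible to $O(C\log n)$ bits in total: the $A$-side portion of any fundamental cycle is a union of $T$-paths between ``portal'' vertices (the $A$-endpoints of the at most $C$ cut edges), and the Steiner subtree of $T$ spanning these $O(C)$ portals decomposes into $O(C)$ portal-free subpaths; $P_A$ need only send the maximum weight on each subpath, after which $P_B$ (who knows the fixed topology) can evaluate every cycle test involving $A$-side edges, and symmetrically. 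Hence the answer function of \emph{any} instance of your form has deterministic two-party complexity $O(C\log n)=\tilde{O}(\sqrt{n})$ across the cut; it can never equal $\mathrm{DISJ}_b$ with $b=\Theta(n)$, and plugging this into your own inequality $t\cdot C\cdot\log n\ge \Omega(\mathrm{CC})$ yields only $t=\Omega(1)$. (Concretely, your per-coordinate AND gadgets can be isolated on at most $O(C)$ coordinates when weights are polynomially bounded, as the CONGEST encoding requires; exponentially growing weight scales would evade this but are forbidden.)

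The paper escapes this trap by changing both the hard problem and the accounting. The inputs sit only at the two far ends $s=h_0$ and $r=h_{m^2}$ of $F^2_m$, with $b=m^2=\Theta(\sqrt{n})$ bits each, and the reduction (Lemma \ref{reduc2}) is from vector \emph{equality}: the weights ($1$ or $3$) of the $S_0$ spokes encode $\bar{X}^s$, the $S_m$ spokes have weight $2$, and the indicator variables mark one spoke per pair, so the candidate is an MST iff the marked spoke of each pair is the lighter one, iff $\bar{X}^s=\bar{X}^r$. Crucially, the time bound for equality is proved not by a static cut simulation --- which would again be vacuous here, since $b=\Theta(C)$ --- but by the Peleg--Rubinovich middle-set argument: the middle set $M_t$ shrinks by one path-vertex per side per round, so the only edges that can inject fresh information into $M_{t+1}$ are the \emph{two} boundary highway edges, giving $\rho[M_t,t]\le(2^B+1)^{2t}$ (Lemma \ref{step}, Corollary \ref{cl:bound-rho}); for $t<m^2/2B$ this is below $2^{m^2}$, so two distinct diagonal inputs $(x,x)$ and $(x',x')$ collide on the middle configuration, and the cut-and-paste of Observation \ref{l:similar-executions} fools $r$ on the crossed input $(x',x)$. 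The accounting difference is the whole theorem: the paper charges $O(B)$ bits per round, while your simulation charges $\Theta(\sqrt{n}\log n)$ per round, giving away exactly the factor $\sqrt{n}$ you are trying to prove. If you insist on a disjointness-based route (which would also cover randomized algorithms, unlike equality), you need the shrinking-overlap simulation later developed in \cite{DHKKNPPW}, not a static cut. A minor further point: $F^2_m$ has diameter $\Theta(n^{1/4})$, not $O(\log n)$.
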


To the best of our knowledge, this paper is the first to investigate
distributed verification  without assuming any kind of preprocessing (see Section~\ref{preprocessing}). Following this paper, several other works on distributed verification have already been published. More specifically, a systematic study of distributed verification is established for various verification tasks \cite{DHKKNPPW}. Distributed verification in the $\cal{LOCAL}$ model has been studied in \cite{FKP11,FKPP12}, mostly focusing on computational complexity issues.

Our $\tilde{\Omega}(\sqrt{n} + D)$ time lower bound for verifying an MST is achieved by a (somewhat involved) 
modification of the corresponding lower bound for the computational task 
\cite{PR_00}. 
The idea behind our time lower bound was already found useful in several 
continuation studies. Specifically, a modification of our proof technique 
was used in \cite{DHKKNPPW} to yield time lower bounds for several other 
verification tasks, and for establishing a lower bound on the hardness 
of approximating an MST. Somewhat surprisingly, in \cite{NDP11}, 
the proof technique was extended even further to yield a result 
in the (seemingly unrelated) area of distributed random walks.

\subsection{Other related work}
\subsubsection{MST computation and verification in the centralized setting}  In the centralized setting, there is a large body of literature concerning the
problem of efficiently computing an MST of a given weighted graph.
Reviews can be found, e.g., in the survey paper by Graham and Hell \cite{GH85} 
or in the book by Tarjan \cite{T83} (Chapter 6). 
The fastest known algorithm for finding an MST is that of 
Pettie and Ramachandran  \cite{optimalMST}, which runs
in $O(m\cdot\alpha(m,n))$ time, where $\alpha$ is 
 the inverse Ackermann function, 
$n$ is the number of vertices 
and $m$ is the number of edges in the graph. 
Unfortunately, a linear (in the number of edges) time algorithm 
for computing an MST is known only in certain cases, or by using randomization
\cite{FredmanWillard,KargerKleinTarjan}. 

The separation between computation and verification, and specifically,
the question of whether verification is easier than computation, is a central 
issue of profound impact on the theory of computer science. 
In the context of MST, the  verification problem 
(introduced by Tarjan \cite{Tarjan79}) is the following:
given a weighted graph, together with a subgraph, 
it is required to decide whether this subgraph forms an MST of the graph. 
At the time it was published, the running time of the MST verification 
algorithm of \cite{Tarjan79} was indeed superior to the best known bound
on the computational problem. Improved verification  algorithms in different
centralized models were then given by Harel \cite{Harel85},
Koml$\grave{o}$s \cite{Komlos}, and Dixon, Rauch, and Tarjan
\cite{DixonRauchTarjan}, and parallel algorithms were presented by Dixon and
Tarjan \cite{DixonTarjan} and by King, Poon, Ramachandran, and Sinha
\cite{king2}. 
Even though it is not known whether there exists a deterministic algorithm that 
computes an MST in $O(m)$  time, the verification algorithm of 
\cite{DixonRauchTarjan} is in fact linear, i.e., runs in time $O(m)$ (the same result with a simpler 
algorithm was later presented by King \cite{king97} and by Buchsbaum \cite{BGKRTW_08}). 
For the centralized setting, this may indicate that the verification 
of an MST is indeed easier than its computation.

\subsubsection{Distributed verification with preprocessing}\label{preprocessing}

Some previous 
papers investigated distributed verification tasks assuming that the algorithm 
designer can perform a preprocessing stage to help the verification, 
cf. ~\cite{AKY97,APV,silent,KK07,KKP10}. 
Typically, in this preprocessing stage, data structures (i.e., labels, proofs) 
are provided to the nodes of the graph, and using these data structures, 
the verification proceeds in a constant number of rounds. The focus in those 
studies was on the  minimum size of a data structure (i.e., amount of 
information stored locally), while the complexities of the preprocessing stage 
providing the data structures were not analyzed. 

 \section{Preliminaries}

\subsection{The Model}

A point-to-point communication network is modeled as an undirected
graph $G(V,E)$,
where the vertices in $V$ represent the network processors and the
edges in $E$ represent the communication links connecting them. We denote by $n$ the number of vertices in $G$, i.e., $n=|V|$, and let $m$ denote the number of edges, i.e., $m=|E|$.
The {\em length} of a path in $G$  is the number of edges it
contains. The {\em distance} between two vertices $u$ and $v$ is
 the length of the shortest path connecting them. 
The {\em diameter} of $G$, denoted $D$, is the maximum distance
between any two vertices of $G$.

Vertices are assumed to have unique {\em identifiers}, 
and each vertex $v$ knows its own identifier $\ID(v)$.
A weight function $\omega: E \rightarrow \mathbb{N}$ associated with the
graph assigns a nonnegative integer {\em weight} $\omega(e)$ to each edge
$e=(u,v)\in E$.  The vertices do not know the topology or the edge weights of the
entire network, but they  know the weights of the edges incident to them, that is, the weight $\omega((u,v))$ is known to  $u$ and $v$. 
Similarly to corresponding works on MST computation,  we assume that the edge weights are bounded by a polynomial in $n$ 
(this assumption implies that a weight of an edge can be encoded using $O(\log n)$ bits, and hence can be encoded in one message).

Similarly to previous work, we consider the $\cal{CONGEST}$ model. 
Specifically, the vertices can communicate only by sending and receiving
messages over the communication links. 
Each vertex can distinguish between its incident edges. Moreover, 
if vertex $v$ sends a message to vertex $u$ along the edge $e=(v,u)$,
then upon receiving the message, vertex $u$ knows that the message 
was delivered over the edge $e$. At most one $O(\log n)$ bits can be sent 
on each link in each message. 
Similarly to  \cite{GKP98,KP_98}, we assume that the communication 
is carried out in a synchronous manner,
i.e., all the vertices are driven by a global clock.
Messages are sent at the beginning of each round, and are received at
the end of the round.
(Clearly, our lower bounds hold for asynchronous networks as well.)
Relevant studies typically assume that computations start 
either at a single {\em source} node or simultaneously at all nodes.
Our results hold for both of these settings.

\subsection{The distributed MST Verification problem}

Formally, the {\em minimum-weight spanning tree (MST) verification} problem can be 
stated as follows. Given a graph $G(V,E)$, a weight function $\omega$ 
on the edges, and a subset of edges $T\subseteq E$, referred as the 
{\em MST candidate},  it is required to decide whether  $T$ forms a minimum  
spanning tree on $G$, i.e., a spanning tree  whose total weight 
$\omega(T)= \sum_{e\in T} \omega(e)$ is minimal.
In the distributed model, the input and output of the 
{\em MST verification problem} are represented as follows.
Each vertex knows the weights of the edges connected to its 
immediate neighbors. A degree-$d$ vertex $v \in V$ with
neighbors $u_1,\dots, u_d$ has $d$ {\em weight variables}
$W_1^v, \dots, W_d^v$, with $W_i^v$ containing the weight of the edge
connecting $v$ to $u_i$, i.e., $W_i^v = \omega(v, u_i)$, and $d$ 
{\em boolean indicator variables}  $Y_1^v, \dots, Y_d^v$ indicating which 
of the edges adjacent to $v$  participate in the MST candidate that we wish 
to verify. The indicator variables must be consistent, namely, 
for every edge $(u,v)$, the indicator variables stored at $u$ and $v$ 
for this edge must agree (this is easy to verify locally). 
Let $T_Y$ be the set of edges marked by the indicator variables 
(i.e., all edges for which the indicator variable is set to~1).
The output of the algorithm at each vertex $v$ 
is an assignment to a (boolean) output variable $A^v$ that must satisfy $A^v=1$
if $T_Y$ is an MST of $G(V,E,\omega)$, and $A^v=0$ otherwise.

\section{An MST Verification Algorithm}
\label{sec:alg}

In this section we describe our MST verification algorithm, 
prove its correctness and analyze its time and message complexities.

First, we note that in this section we assume that the verification algorithm 
is initiated by a designated source node. The case in which all nodes wake up 
simultaneously can be reduced to the single source setting using the 
leader election algorithm of \cite{KPPRT12}, which employs $\tilde{O}(m)$ 
messages and runs in $\tilde{O}(D)$ time.

\subsection{Definitions and Notations}

Following are some definitions and notations used in the description 
of the algorithm.
For a graph $G=(V,E,\omega)$, an edge $e$ is said to be {\em cycle-heavy} 
if there exists a cycle $C$ in $G$ that contains $e$, and $e$ has 
the heaviest weight in $C$.
For a graph $G=(V,E,\omega)$, a set of edges $F\subseteq E$ is said to be an 
{\em MST fragment} of $G$ if there exists a minimum spanning tree $T$ 
of $G$ such that $F$ is a subtree of $T$ 
(i.e., $F\subseteq T$ and $F$ is a tree). 
Similarly, a collection  $\cF$ of edge sets is referred to as an 
{\em MST fragment collection} 
of $G$ if there exists an MST  $T$ of $G$ 
such that 
(1) $F_i$ is a subtree of $T$  for every $F_i\in \cF$,
(2) $\bigcup_{F_i\in \cF} V(F_i)=V$,
and 
(3) $V(F_i) \cap V(F_j) =\emptyset$ for every $F_i,F_j\in \cF$, where $i\neq j$.

Consider a graph $G=(V,E,\omega)$, an MST fragment collection $\cF$,
a subgraph $T$ of $G$ and a vertex $v$ in $G$.
The {\em fragment graph} of $G$, denoted $G_{\cF}$, is defined as a graph 
whose vertices are the MST fragments $F_i\in \cF$, and whose edge set contains 
an edge $(F_i, F_j)$  if and only if there exist vertices $u\in V(F_i)$ and $v\in V(F_j)$ 
such that $(u,v)\in E$. Similarly,
the {\em fragment graph induced by} $T$, denoted $T_{\cF}$,  
is defined as a graph whose vertices are the MST fragments $F_i\in \cF$, 
and whose edge set contains an edge $(F_i, F_j)$ for $i\neq j$  if and only if there exist vertices 
$u\in V(F_i)$ and $v\in V(F_j)$ such that $(u,v)\in T$. The edges of $T_{\cF}$ 
are also referred to as the {\em inter-cluster} edges induced by $T$.

For each vertex $v$, let $E_T(v)$ denote the set of edges of $T$ incident to $v$.
For each fragment $F\in \cF$, the set of {\em fragment internal} 
edges of $F$ induced by $T$, denoted $E_{F,T}$, consists of all edges of $T$ with both 
endpoints in $V(F)$, i.e., 
$E_{F,T}=\{e\; |\; e=(u,v)\in T \; and \;  \ u,v\in V(F)\}$.
The fragment of $v$, denoted by $F(v)$, is the fragment 
$F\in \cF$ such that $v\in V(F)$.
Denote by $E_{F}(v)$ the set of edges in $F(v)$ that are incident to $v$ 
(i.e., $E_{F}(v)=\{e | \; e=(u,v)\in E\; and \;  u\in V(F(v))\}$).
Similarly, denote by $E_{F,T}(v)$ the set of fragment internal edges of $F$ induced by $T$ 
and incident to~$v$.

Throughout the description of the verification algorithm we assume that 
the edge weights are distinct. Having distinct edge weights simplifies our arguments   
since it guarantees the uniqueness of the MST. Yet,  this assumption is not essential. Alternatively, in case the graph is not guaranteed 
to have distinct edge weights, we may use a modified weight function 
$\omega'(e)$, which orders edge weights lexicographically. 
At this point we would like to note that the standard technique (e.g., \cite{GHS_83}) for obtaining unique weights is not sufficient for our purposes. 
Indeed, that technique orders edge weights lexicographically:  
first, by their original weight $\omega(e)$, and then, by the identifiers of 
the edge endpoints (say, first comparing the smaller of the two identifiers of the endpoints, and then the larger one). 
This yields a modified graph with unique edge weights, and an MST of the modified graph is necessarily an MST of the original graph. 
For construction purposes it is therefore sufficient to consider only 
the modified graph. However, this is not the case for verification purposes, 
as the given subgraph can be an MST of the original graph but not necessarily 
an MST of the modified graph. 

While we cannot guarantee than any MST of the original graph is an MST of 
the modified graph (having unique edge weights), we make sure that 
the particular given subgraph $T$ is an MST of the original graph 
if and only if it is an MST of modified one. This condition is sufficient for our purposes, and allows us to consider only the modified graph.
Specifically, to obtain the modified graph, we employ a slightly different 
technique than the classical one. 
That is, edge weights are lexicographically ordered as follows.
For an edge $e=(v,u_i)$ connecting $v$ to its $i$th neighbor $u_i$, 
consider first its original weight $\omega(e)$, 
next, the value $1-Y_i^v$ where $Y_i^v$ is the indicator variable of the edge
$e$ (indicating whether $e$ belongs to the candidate MST to be verified), 
and finally, the identifiers of the edge endpoints, $\ID(v)$ and $\ID(u_i)$
(say, first comparing the smaller of the two identifiers of the endpoints, 
and then the larger one). 
Formally, let 
$$\omega'(e) ~=~ 
\left\langle \omega(e), 1-Y_i^v, \ID_{min}(e), \ID_{max}(e) \right\rangle~,$$
where $\ID_{min}(e) = \min\{\ID(v),\ID(u_i)\}$
and $\ID_{max}(e) = \max\{\ID(v),\ID(u_i)\}$.
Under this weight function $\omega'(e)$, edges with indicator variable set to 1
will have lighter weight than edges with the same weight under $\omega(e)$ 
but with indicator variable set to 0 
(i.e., for edges $e_1\in T$ and $e_2\notin T$ such that 
$\omega(e_1)=\omega(e_2)$, we have $\omega'(e_1)< \omega'(e_2)$).
It follows that  the given subgraph $T$ 
 is an MST of $G(V,E,\omega)$ if and only if  $T$ is an MST
of $G(V,E,\omega')$. Moreover, since $\omega'(\cdot)$ takes into account  
the unique vertex identifiers, it assigns distinct edge weights.

The MST verification algorithm makes use of Procedures $\DOM$ and $\LAB$, 
presented in \cite{KP_98} and \cite{KKKP_05} respectively. Before we continue, let us first recall what these procedures guarantee. 

\paragraph{Procedure $\DOM$:} 
The distributed Procedure $\DOM$, used in \cite{KP_98}, partitions a given 
graph into an {\em MST fragment collection (MFC)} $\cF$, where each fragment 
is of size at least $k+1$ and depth $O(k)$, for a parameter $k$ to be 
specified later (aiming to optimize the total costs). 
A {\em fragment leader} vertex is associated with each constructed fragment 
(the identifier of the fragment is defined as the identifier of the 
fragment's leader). 
After Procedure $\DOM$ is completed, 
each vertex $v$ knows the identifier of the fragment 
to which it belongs  and $v$'s incident edges that belong to the fragment.
(To abide by the assumption of \cite{KP_98} that each vertex 
knows the identifiers of its neighbors, before applying Procedure $\DOM$, 
the algorithm performs a single communication round that exchanges 
vertex identifiers between neighboring vertices.) The execution of 
Procedure $\DOM$, requires $O(k\cdot \log ^*n)$ time and 
$O(m\cdot \log k +n\cdot \log^*n\cdot \log k)$ messages. 
The parameter $k$ to be used by our protocol, chosen as a suitable breakpoint 
so as to optimize the total costs, satisfies $k=\Theta(\sqrt n+D)$, 
hence, our invocation of Procedure $\DOM$ will require 
$\tilde{O}(\sqrt{n}+D)$ time and $\tilde{O}(m)$ messages.

\paragraph{The $\LAB$  labeling scheme:} 
The labeling scheme $\LAB$ 
of \cite{KKKP_05} is a centralized procedure designed for the family of 
weighted trees. The procedure involves two components: 
an encoder algorithm $\cE$ and 
a decoder algorithm $\cD$. These two components satisfy the following.
\begin{enumerate}
\item Given a weighted tree $T$, the encoder algorithm $\mathcal E$ 
assigns a label $L(v)$ to each vertex~$v$~of~$T$.
\item Given two labels $L(u)$ and $L(v)$ assigned by $\mathcal E$ to two vertices $u$ and $v$ in some weighted tree~$T$, the decoder 
algorithm $\mathcal D$ outputs $MAX(u,v)$, which is the maximum weight 
of an edge on the path connecting $u$ and $v$ in $T$. (The decoder 
algorithm $\mathcal D$ bases its answer on the pair of labels   $L(u)$ and $L(v)$ only, without knowing any additional information regarding the tree~$T$.)
\end{enumerate} 

The labeling scheme  $\LAB$  produces $O(\log n \log W)$-bit labels, 
where $W$ is the largest weight of an edge. Since $W$ is assumed to be 
polynomial in $n$, the label size is $O(\log ^2 n)$ bit.

\subsection{The algorithm}

\paragraph{Overview:} The algorithm consists of three  phases. The first phase  starts by letting the source node $s$ construct a BFS tree rooted at $s$, and calculating the values of the number of nodes $n$ and a 2-approximation $d$ to the diameter $D$ of the graph. Then, the distributed 
Procedure $\DOM$ of \cite{KP_98} in invoked with parameter $k=\max \{\sqrt n,d\}=\Theta(\sqrt n+D)$, where $D$ is the diameter of the graph. This procedure constructs an MST fragment collection 
(MFC) $\cF$, where every fragment in $\cF$
is of size at least $k+1$ and depth $O(k)$. As mentioned, our invocation of Procedure $\DOM$ requires $\tilde{O}(\sqrt{n}+D)$ time and 
$\tilde{O}(m)$ messages.
The algorithm verifies that the set of edges contained in the constructed MFC 
is equal to the set of fragment internal edges induced by the MST candidate $T$, i.e., 
$\bigcup_{F\in \cF} E(F) = \bigcup_{F\in \cF} E_{F,T}$ 
(this verifies that each $F\in \cF$ is contained in $T$ and that  $T$ does not contain additional fragment internal edges.)
Note that this is a necessary  condition for correctness since the graph is assumed to have  a unique MST. 

In the following phases, the algorithm verifies that all remaining edges of $T$  form an MST on the fragment graph $G_{\cF}$. Let $T_{\cF}$ be the fragment 
graph induced by $T$ with respect to the MFC $\cF$ found in the previous phase.
In order to verify that $T_{\cF}$ forms an MST on the 
fragment graph $G_{\cF}$, it suffices to verify that $T_{\cF}$ is a tree and 
that none of the edges of $T_{\cF}$ is  a {\em cycle heavy} edge in  $G_{\cF}$.
The above is done as follows. 

In the second phase, the structure of  $T_{\cF}$ is aggregated over the BFS tree to the source node $s$, which in turn verifies that
 $T_{\cF}$ is indeed a tree. Note that this aggregation is not very wasteful, and requires $O(D+f)$ time and $O(D\cdot f)$ messages, where $f$ is the number of nodes in $T_{\cF}$. As shown later, $f \le n/k$, and hence, this aggregation can be done using $\tilde{O}(\sqrt{n}+D)$ time and 
$\tilde{O}(m)$ messages.

In the third phase, the source node $s$ employs the (centralized) labeling scheme $\LAB$ of \cite{KKKP_05} (or \cite{KK07}) on $T_{\cF}$. 
Informally, the labeling scheme assigns a label $L(F_i)$ to each vertex $F_i$ of $T_{\cF}$ 
using  the encoder algorithm $\cE$ applied on $T_{\cF}$. The label $L(F_i)$ is then efficiently delivered to each vertex in $F_i$.
More specifically, the $f$ labels are broadcasted over the BFS tree, so that eventually, each node in a fragment knows its corresponding label. This broadcasts costs $\tilde{O}(D+f)=\tilde{O}(\sqrt{n}+D)$ time and $\tilde{O}(D\cdot f)=\tilde{O}(m)$ messages.
Recall, that given the labels of two fragments $L(F_i)$ and $L(F_j)$ it is now possible to 
compute the weight of the heaviest edge on the path connecting the fragments 
in $T_{\cF}$ by applying the decoder algorithm $\cD$. Once all vertices obtain the labels of their corresponding fragments, each vertex of $G$ can verify  (using the decoder~$\cD$) by communicating with its neighbors only,
that each inter fragment edge incident to it and not participating in 
$T_{\cF}$ forms a cycle when added to $T_{\cF}$ for which it is a cycle heavy 
edge. Following is a more detailed description of the algorithm.

\begin{enumerate}
\item
{\bf\underline{Phase 1}}

\begin{enumerate}
\item \label{send_ids} The source node $s$ (which initiates the algorithm) constructs a BFS tree rooted at $s$, computes the values $n$ and $d$, where $n$ is the number of nodes and $d$ is the depth of the BFS tree. (Note that $d$ is a 2-approximation to the diameter $D$ of the graph.) Subsequently, the source node $s$ broadcasts a signal over the BFS tree containing the values $n$ and $d$ and instructing each vertex to send its identifier to all its neighbors. 
This guarantees that all nodes know the values of $n$ and $d$ as well as the identifiers of their neighbors. 
In addition, to start the next step (i.e., Step 2.b) at the same time, the broadcast is augmented with a counter that is initialized by the source  $s$ to $d+1$ is decreased by one when delivered to a child in the BFS tree. Let $c(u)$ denote the counter received at node $u$. Before starting the next step each node $u$ waits for $c(u)$ rounds after receiving the counter.

\item \label{dom} 
Perform  Procedure $\DOM(k)$, with parameter $k=\max \{\sqrt n,d\}=\Theta(\sqrt n+D)$. 
The result is an MFC $\cF$, where each fragment 
$F\in \cF$ is of size $|V(F)|>k$ 
and depth $O(k)$, having a fragment leader and a distinct fragment identifier 
known to all vertices in the fragment.
\item \label{send_fragment} 
Each vertex sends its fragment identifier to all its neighbors.
\item  \label{identify_int_vs_ext_edges}
By comparing the fragment identifiers of its neighbors with its own fragment identifier, each vertex $v$ identifies the sets of  edges  $E_{F}(v)$ and $E_{F,T}(v)$.
\item \label{verify_internal_edges}
Verify that 
$\bigcup_{F\in \cF} E(F) = \bigcup_{F\in \cF} E_{F,T}$ 
by verifying at each vertex $v$ that $E_{F}(v)=E_{F,T}(v)$.
(Else return ``$T$ is not an MST''.)
\end{enumerate}

\item 
{\bf \underline{Phase 2}}
\begin{enumerate}

\item \label{count_fragments}  
Vertex $s$ counts the number of fragments, denoted by $f$. This is implemented by letting $s$ broadcast a signal over the BFS tree instructing the nodes to perform a convergecast 
over the BFS tree. During this convergecast the number of fragments is aggregated to the root $s$ by letting only fragment leader vertices increase 
the fragment counter. This guarantees that the fragment  counter at the root $s$ is $f$.
\item \label{count_inter_fragment}
Vertex $s$ counts the number of inter fragment 
edges induced by $T$ (i.e., the number of edges in $T_{\cF}$) by performing another convergecast over the BFS tree. Then, $s$ verifies that the number of edges is equal to $f -1$. 
(Else return ``$T$ is not an MST''.) 
\item \label{send_Tf}
Vertex $s$ broadcasts a signal over the BFS tree instructing all vertices to send  the description of all their incident edges in $T_{\cF}$ 
to $s$, by performing a convergecast over the BFS tree. (The edges of $T_{\cF}$ are 
all edges of $T$ that connect vertices from different fragments.)
\item \label{verify_tree}Vertex $s$ verifies that $T_{\cF}$ is a tree. 
(Else return ``$T$ is not an MST''.)
\end{enumerate}

\item 
{\bf \underline{Phase 3}}
\begin{enumerate} 
\item \label{route}
Each fragment leader sends a message to vertex $s$ over the BFS tree. 
Following these messages, all vertices save routing information regarding the fragment leaders.
I.e., if $v$ is a fragment leader and $v$ is a descendant of some other vertex $u$ in the BFS tree, then, 
after this step is applied, $u$ knows which of its children is on the path connecting it to  the fragment leader $v$.
\item \label{encode}
Vertex $s$ computes  the labels $L(F)$ for all vertices $F$ in $T_{\cF}$
(i.e., assigns a label to each of the fragments) 
using the encoder algorithm $\cE$. Subsequently, $s$ sends to each 
fragment leader its fragment label 
(the label of each fragment is sent to the fragment leader over the BFS tree 
using the routing information established in Step \ref{route} above).
(Recall that each label is encoded using $O(\log^2 n)$ bits.)
\item \label{send_labels_leaders}
Each fragment leader broadcasts its fragment label to all vertices in the fragment. The broadcast is done over the fragment edges.
\item \label{send_labels_neig}
Each vertex $v$ sends its fragment label to all its neighbors in other 
fragments. (Recall that $v$ already knows $E_F(v)$ by step 
\ref{identify_int_vs_ext_edges}.)
\item \label{verify_cycle_heavy}Each vertex $v$ verifies, for every 
neighbor $u$ such that $u$ does not belong to $v'$s fragment and 
$(u,v)\notin T$, that $\omega(v,u)\ge MAX(F(v), F(u))$. 
(The value $MAX(F(v), F(u))$ is computed by applying 
the decoder algorithm $\cD$ to the labels $L(F(v))$ and $L(F(u))$.) 
(Else return ``$T$ is not an MST''.)
\end{enumerate}
\end{enumerate}

\subsection{Correctness}

We now show that our MST verification algorithm correctly identifies whether 
the given subgraph $T$ is an MST. We begin with the following claim.

\begin{claim} \label{clm_fragmet_mst}
Let $T$ be a spanning tree of $G$ such that $T$ contains all edges of the MFC 
$\cF$ and $T_{\cF}$ forms an MST on the fragment graph of $G$ (with respect 
to  the MFC $\cF$). Then $T$ is an MST on $G$.
\end{claim}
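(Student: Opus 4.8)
The plan is to exploit the uniqueness of the MST (guaranteed by the distinct-weight assumption) together with the standard fact that contracting MST fragments preserves the MST. Write $T^*$ for the unique MST of $G$. Since $\cF$ is an MST fragment collection, there is an MST containing every fragment as a subtree, and by uniqueness this MST must be $T^*$, so $\bigcup_{F\in\cF}E(F)\subseteq T^*$. I would first record two structural observations. (i) Because $T$ is a spanning tree containing all fragment edges, $T$ can contain no \emph{additional} intra-fragment edge: such an edge, together with the unique fragment path between its endpoints, would close a cycle inside $T$. The same holds for $T^*$. Hence, for each of $T$ and $T^*$, the set of inter-fragment edges is exactly the complement of $\bigcup_{F\in\cF}E(F)$, and these inter-fragment edges are precisely the edges of $T_{\cF}$ and of $T^*_{\cF}$, respectively. (ii) Contracting each fragment (which is connected, and the fragments are vertex-disjoint and cover $V$) of the tree $T^*$ again yields a tree, so $T^*_{\cF}$ is a spanning tree of $G_{\cF}$.

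The heart of the argument is to show that $T^*_{\cF}$ is the \emph{unique} MST of the fragment graph $G_{\cF}$, where I regard $G_{\cF}$ as carrying, for each inter-fragment edge $(u,v)\in E$, an edge of weight $\omega(u,v)$ (a multigraph), so that its edge weights are distinct and its MST is unique. I would verify the cycle property directly. Take any edge $g$ of $G_{\cF}$ not in $T^*_{\cF}$, corresponding to some $(u,v)\in E$ whose endpoints lie in distinct fragments; since $g\notin T^*_{\cF}$, observation (i) gives $(u,v)\notin T^*$. Adding $(u,v)$ to $T^*$ creates its fundamental cycle, on which, by the cycle property of the MST $T^*$, the edge $(u,v)$ is strictly the heaviest. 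Projecting this cycle into $G_{\cF}$ (contracting the fragments) turns its inter-fragment edges into exactly the edges of the fundamental cycle of $g$ in $T^*_{\cF}$; hence $g$ is strictly heavier than every edge on that cycle. As $g$ was arbitrary, $T^*_{\cF}$ satisfies the cycle property and is therefore the unique MST of $G_{\cF}$.

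To finish, I would invoke the hypothesis that $T_{\cF}$ forms an MST of $G_{\cF}$. By the uniqueness just established, $T_{\cF}=T^*_{\cF}$; that is, the inter-fragment edges of $T$ coincide with those of $T^*$. Combined with observation (i) --- both $T$ and $T^*$ consist of exactly $\bigcup_{F\in\cF}E(F)$ together with their inter-fragment edges --- this forces $T=T^*$, so $T$ is the MST of $G$.

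The main obstacle I anticipate is the inter-fragment case. A naive attempt to prove $T$ is an MST directly via the cycle property fails: for a non-tree inter-fragment edge $e$, its fundamental cycle in $T$ also traverses fragment-internal edges, and there is no direct reason for $e$ to outweigh those. The contraction-plus-uniqueness route sidesteps this by comparing $e$ only against inter-fragment edges (through the cycle property in $G_{\cF}$) while letting the MST-fragment property take care of everything inside the fragments. Care is needed in the cycle-lifting step, to confirm that the fundamental cycle of $g$ in $T^*_{\cF}$ is precisely the image of the fundamental cycle of $(u,v)$ in $T^*$, and in fixing the weight convention on $G_{\cF}$ so that its MST is genuinely unique.
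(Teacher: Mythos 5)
Your proposal is correct, and it shares the paper's skeleton---compare $T$ against an MST $T'$ that contains all fragments of $\cF$, mediated by the fragment graph---but the finish is genuinely different. The paper's proof is a three-line weight count: by minimality of $T'$, the contracted tree $T'_{\cF}$ is an MST of $G_{\cF}$; since both $T$ and $T'$ decompose into the fragment edges plus an MST of $G_{\cF}$, it follows that $\omega(T)=\omega(T')$, so $T$ is \emph{an} MST. Uniqueness of the MST is never used inside this claim (the paper invokes it only afterwards, to upgrade the claim to an equivalence). You instead use uniqueness twice: once to identify $T^*$ as the MST containing $\cF$, and once---after fixing the multigraph weight convention on $G_{\cF}$---to show via the cycle property that $T^*_{\cF}$ is the \emph{unique} MST of $G_{\cF}$, whence $T_{\cF}=T^*_{\cF}$ and, combined with your observation (i), $T=T^*$ outright. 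Your route buys a stronger conclusion (literal identity $T=T^*$, not merely equal weight) and makes explicit two facts the paper glosses over: that a spanning tree containing $\bigcup_{F\in\cF}E(F)$ can have no further intra-fragment edges, and that contracting the fragments of the MST yields an MST of $G_{\cF}$; your cycle-lifting step is sound, since a $T^*$-path cannot exit and re-enter a fragment (the fragment is a connected subtree of $T^*$, so that would create two distinct tree paths between its endpoints), hence the projected path is exactly the tree path in $T^*_{\cF}$. The paper's route buys brevity, and its argument for this direction does not depend on the distinct-weights assumption at all.
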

\begin{proof} 
Since $\cF$ is an MST fragment collection, there exists an MST $T'$ of $G$ 
such that $T'$ contains all edges of $\cF$. Due to the minimality 
of $T'$, the fragment graph $T'_{\cF}$ induced by $T'$ necessarily forms 
an MST on $G_{\cF}$, the fragment graph of $G$. 
Hence we get that $\omega(T)=\omega(T')$, thus $T$ is an MST of~$G$.
\end{proof}

Due to the assumption that edge weights are distinct, we get:
\begin{observation} 
\label{unique_mst}
The MST of $G$ is unique.
\end{observation}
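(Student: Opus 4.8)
The statement to prove is Observation~\ref{unique_mst}: under the standing assumption that all edge weights are distinct, the MST of $G$ is unique.

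The plan is to give a short proof by contradiction using a standard exchange argument. First I would suppose, for the sake of contradiction, that $G$ admits two distinct minimum spanning trees $T_1$ and $T_2$. Since $T_1 \neq T_2$, their symmetric difference is nonempty, so there is at least one edge lying in exactly one of them; among all such edges, let $e$ be the one of minimum weight, and assume without loss of generality that $e \in T_1 \smallsetminus T_2$. Note that $e$ is well-defined precisely because the weights are distinct, so there is a unique lightest edge in the symmetric difference and no ambiguity in choosing it.

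Next I would perform the exchange. Adding $e$ to $T_2$ creates a unique cycle $C$ (since $T_2$ is a spanning tree). Because $e \notin T_2$ while $e \in T_1$, and $T_1$ is acyclic, the cycle $C$ cannot be contained entirely in $T_1$; hence $C$ must contain some edge $e'$ with $e' \in T_2 \smallsetminus T_1$. By the minimality in the choice of $e$, every edge in the symmetric difference has weight at least $\omega(e)$, and since weights are distinct and $e' \neq e$, we get the strict inequality $\omega(e') > \omega(e)$. Now consider the tree $T_2' = T_2 - e' + e$: swapping $e'$ out of $C$ and $e$ in keeps the graph connected and acyclic, so $T_2'$ is again a spanning tree, with total weight $\omega(T_2') = \omega(T_2) - \omega(e') + \omega(e) < \omega(T_2)$. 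This contradicts the minimality of $T_2$ as an MST, completing the argument.

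There is no real obstacle here; the only point requiring the distinctness hypothesis is the step that upgrades the inequality $\omega(e') \ge \omega(e)$ to a strict one, which is exactly what drives the contradiction and is why uniqueness fails in general when weights may coincide. I would keep the writeup to a few lines, since this is a classical fact invoked purely to justify that the preceding algorithm's reliance on a single well-defined target tree is legitimate.
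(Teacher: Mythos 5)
Your proof is correct: it is the standard exchange argument, and the only place the distinctness hypothesis is needed is exactly where you use it, to make the inequality $\omega(e') > \omega(e)$ strict. Note that the paper itself states this observation without any proof (it is presented as an immediate, classical consequence of the distinct-weights assumption), so your writeup simply supplies the folklore argument the authors implicitly invoke.
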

By combining Claim \ref{clm_fragmet_mst} and Observation \ref{unique_mst} 
we get the following.
\begin{claim} \label{show_mst}
A spanning tree $T$ of $G$ is an MST if and only if  $T$ contains all edges 
of the MFC $\cF$ and $T_{\cF}$ forms an MST on $G_{\cF}$.
\end{claim}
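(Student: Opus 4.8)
The plan is to prove Claim~\ref{show_mst} by combining the two ingredients already established, treating the two implications of the ``if and only if'' separately. The forward direction (only if) should be almost immediate: if $T$ is an MST of $G$, then by Observation~\ref{unique_mst} it is \emph{the} unique MST. Since $\cF$ is an MST fragment collection, some MST contains all edges of $\cF$; by uniqueness that MST is $T$ itself, so $T$ contains all edges of $\cF$. For the claim that $T_{\cF}$ forms an MST on $G_{\cF}$, I would invoke essentially the same argument as in the proof of Claim~\ref{clm_fragmet_mst}: the minimality of $T$ forces the induced fragment graph $T_{\cF}$ to be a minimum spanning tree of $G_{\cF}$, since otherwise one could swap an inter-cluster edge for a lighter one and obtain a spanning tree of strictly smaller weight, contradicting the optimality of $T$.

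For the reverse direction (if), I would appeal directly to Claim~\ref{clm_fragmet_mst}. Suppose $T$ is a spanning tree of $G$ that contains all edges of $\cF$ and such that $T_{\cF}$ forms an MST on $G_{\cF}$. These are precisely the hypotheses of Claim~\ref{clm_fragmet_mst}, whose conclusion is that $T$ is an MST of $G$. Thus the reverse implication is an immediate restatement of the earlier claim, and no additional work is needed beyond citing it.

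The one step that deserves care, and which I expect to be the main (though modest) obstacle, is making the forward direction fully rigorous rather than merely asserting it by symmetry with Claim~\ref{clm_fragmet_mst}. Concretely, I must justify that an MST $T$ of $G$ which contains all of $\cF$ induces a fragment graph $T_{\cF}$ that is spanning, acyclic, and weight-minimal \emph{as a tree on the quotient graph} $G_{\cF}$. Spanning and acyclicity follow because contracting each fragment (each of which is a subtree of $T$) turns the tree $T$ into a tree on the fragment vertices; minimality follows from a standard exchange argument: any inter-cluster edge of $T_{\cF}$ that is not of minimum weight across its induced cut could be replaced by a lighter crossing edge, yielding a lighter spanning tree of $G$ and contradicting the optimality of $T$. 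Given that Claim~\ref{clm_fragmet_mst} already carries out the analogous reasoning, I would keep this short, noting that the uniqueness supplied by Observation~\ref{unique_mst} lets me identify the MST guaranteed by the fragment-collection property with the given tree $T$, which is what ties the two directions together.
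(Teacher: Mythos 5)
Your proposal is correct and follows essentially the same route as the paper: the paper derives Claim~\ref{show_mst} precisely by combining Claim~\ref{clm_fragmet_mst} (your reverse direction) with Observation~\ref{unique_mst} (which, together with the definition of an MFC and the minimality argument already present in the proof of Claim~\ref{clm_fragmet_mst}, gives your forward direction). Your write-up merely makes explicit the details the paper leaves implicit, such as why contracting the fragments of $T$ yields a spanning tree of $G_{\cF}$ and why minimality of $T$ forces $T_{\cF}$ to be an MST of $G_{\cF}$.
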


\begin{lemma}
\label{lem:correct}
The MST verification algorithm correctly identifies whether the given tree $T$ 
is an MST of the graph $G$.
\end{lemma}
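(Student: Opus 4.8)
The plan is to argue that the algorithm returns ``$T$ is an MST'' (i.e., every $A^v=1$) precisely when the conditions of Claim~\ref{show_mst} hold, so that correctness follows immediately from that claim together with Observation~\ref{unique_mst}. First I would dispose of a degenerate case: the algorithm must also correctly reject when $T_Y$ is not even a spanning tree of $G$. I would observe that Phase~2 verifies that the number of inter-fragment edges of $T_{\cF}$ equals $f-1$ and that $T_{\cF}$ is a tree, and that Phase~1 verifies $\bigcup_{F\in\cF} E(F)=\bigcup_{F\in\cF} E_{F,T}$; combining these, together with the fact that $\cF$ is a genuine MST fragment collection (so each $F$ is a tree and the fragments partition $V$), forces $T_Y$ to be a spanning tree whenever no rejection occurs. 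In particular, any $T$ that is disconnected, cyclic, or non-spanning is caught in Phase~1 or Phase~2.

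Next, assuming $T_Y$ has passed Phases~1 and~2 and is therefore a spanning tree containing all edges of $\cF$, I would show that the Phase~3 test is equivalent to the statement ``$T_{\cF}$ forms an MST on $G_{\cF}$.'' By the cut/cycle characterization of minimum spanning trees (and using that all edge weights are distinct, via $\omega'$), $T_{\cF}$ is the MST of $G_{\cF}$ if and only if no edge of $T_{\cF}$ is cycle-heavy in $G_{\cF}$, equivalently, every non-tree inter-fragment edge $(u,v)$ is the heaviest edge on the cycle it closes in $T_{\cF}$. The key step is to verify that the local check of Step~\ref{verify_cycle_heavy}, namely $\omega(v,u)\ge MAX(F(v),F(u))$ for each non-tree inter-fragment edge, is exactly this condition. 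Here I would invoke the defining guarantee of the labeling scheme $\LAB$: the decoder $\cD$ applied to $L(F(v))$ and $L(F(u))$ returns the maximum weight of an edge on the path between $F(v)$ and $F(u)$ in $T_{\cF}$, which is precisely the heaviest edge on the unique cycle that $(u,v)$ closes. Thus the local inequality holds at every vertex iff $(u,v)$ is not lighter than the heaviest tree edge it bypasses, i.e., iff $(u,v)$ is not a beneficial swap, which is the cycle condition for $T_{\cF}$ being minimum.

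I would then assemble the two directions. If $T_Y$ is the (unique) MST, then by Claim~\ref{show_mst} it contains all edges of $\cF$ and $T_{\cF}$ is an MST on $G_{\cF}$; hence all Phase~1 and Phase~2 consistency checks pass, and by the equivalence just established every Step~\ref{verify_cycle_heavy} inequality holds, so no vertex rejects and every $A^v=1$. Conversely, if no vertex rejects, then the Phase~1/Phase~2 checks certify that $T_Y$ is a spanning tree containing $\cF$, and the Phase~3 checks certify that $T_{\cF}$ is an MST on $G_{\cF}$; Claim~\ref{show_mst} then yields that $T_Y$ is an MST, so $A^v=1$ is the correct output.

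The main obstacle is the verification step, specifically making rigorous that the purely local inequalities of Step~\ref{verify_cycle_heavy}, taken over all vertices, capture the global ``no cycle-heavy inter-cluster edge'' condition. This requires care on two fronts: first, that the weights $\omega(v,u)$ compared to $MAX(F(v),F(u))$ correctly reflect the cost of the swap in $G_{\cF}$, which relies on the distinctness of weights under $\omega'$ so that the minimum over parallel inter-fragment edges is well defined and consistent; and second, that $MAX(F(v),F(u))$ as returned by $\cD$ indeed equals the heaviest edge on the $T_{\cF}$-path between the two fragments, so that each local check corresponds to a genuine fundamental cycle. Once these are pinned down, the correctness of the algorithm reduces cleanly to Claim~\ref{show_mst}.
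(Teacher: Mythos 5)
Your proposal is correct and follows essentially the same route as the paper's proof: reduce correctness to Claim~\ref{show_mst} (together with Observation~\ref{unique_mst}), then check that Phases~1--2 certify that $T$ is a spanning tree containing all edges of $\cF$, and that the local tests of Step~\ref{verify_cycle_heavy}, via the guarantee of the $\LAB$ decoder, certify that no inter-fragment non-tree edge violates the cycle-heavy condition, i.e., that $T_{\cF}$ is an MST of $G_{\cF}$. The paper's own proof is just a terser version of this argument, leaving implicit the two-directional assembly and the role of the labeling scheme that you spell out.
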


\begin{proof} 
By Claim \ref{show_mst}, to prove the correctness of the algorithm it suffices 
to show that given an MST candidate $T$, the algorithm  verifies that: 
\begin{description}
\item{(1)} $T$ is a spanning tree of $G$,
\item{(2)} $T$ contains all edges of $\cF$, and
\item{(3)} $T_{\cF}$ forms an MST on  $G_{\cF}$.
\end{description}

Since $\cF$ as constructed by Procedure $\DOM$ 
in the first phase is an MFC, it spans all vertices of $G$. 
Step \ref{verify_internal_edges} verifies that 
$\bigcup_{F\in \cF} E(F) = \bigcup_{F\in \cF} E_{F,T}$, 
thus after this step, it is verified that $T$ does not contain a cycle that 
is fully contained in some fragment $F\in \cF$ (since every $F\in \cF$ 
is a tree). On the other hand, step \ref{verify_tree} verifies that $T$ 
does not contain a cycle that contains vertices from different fragments. 
Hence, the algorithm indeed verifies that $T$ is a spanning tree of $G$, 
and Property (1) follows. 
Property (2) is clearly verified by step \ref{verify_internal_edges} of the algorithm.
Finally, to verify that $T_{\cF}$ forms an MST on the fragment graph 
of $G$ it suffices to verify that inter-fragment edges not in $T_{\cF}$ 
are cycle heavy, which is done in step \ref{verify_cycle_heavy}.
Property (3) follows.
\end{proof}

\subsection{Complexity}

Consider first the Phase 1. Steps \ref{send_ids} and \ref{send_fragment} clearly take $O(D)$ time and $O(m)$ messages.
Step \ref{dom}, i.e., the execution of 
Procedure $\DOM$, requires $O(k\cdot \log ^*n)$ time and 
$O(E\cdot \log k +n\cdot \log^*n\cdot \log k)$ messages. 
(A full analysis appears in \cite{Vitaly}.) 
The remaining steps of the first phase are local computations performed by 
all vertices. Thus, since $k$ is set to $k=\Theta(\sqrt{n}+D)$,
the first phase of the MST verification algorithm requires 
requires $\tilde{O}(\sqrt{n}+D)$ time and 
$\tilde{O}(m)$ messages.

Observe now, that since the fragments are disjoint and each fragment contains at least $k$ vertices, 
we have the following.
\begin{observation} 
\label{num_of_fragments}
The number of MST fragments constructed during the first phase of the 
algorithm is $f \le n/k=O(\min\{\sqrt{n}, \frac{n}{D}\})$. 
\end{observation}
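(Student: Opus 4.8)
The plan is to prove the statement by a direct double counting of the vertices, followed by an elementary rewriting of the quantity $n/k$. First I would exploit the two defining properties of the MST fragment collection $\cF$ that make the fragments behave like a genuine partition of $V$: by property~(2) we have $\bigcup_{F\in\cF}V(F)=V$, and by property~(3) the fragments are pairwise vertex-disjoint. Together these give the exact identity $\sum_{F\in\cF}|V(F)| = |V| = n$. I would then invoke the size guarantee of Procedure $\DOM$ from Step~\ref{dom}, namely that every fragment satisfies $|V(F)|>k$, and in particular $|V(F)|\ge k$. Summing this lower bound over the $f$ fragments yields $n=\sum_{F\in\cF}|V(F)|\ge f\cdot k$, and hence $f\le n/k$, which is the first inequality in the statement.

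To turn $n/k$ into the claimed asymptotic bound, I would substitute the value $k=\max\{\sqrt n,d\}$ fixed in Step~\ref{dom} and apply the elementary identity $n/\max\{a,b\}=\min\{n/a,\,n/b\}$, valid for positive $a,b$. Since $d$ is the depth of the BFS tree rooted at $s$, i.e.\ the eccentricity of $s$, it satisfies $D/2\le d\le D$; thus $d=\Theta(D)$ and in particular $k\ge\max\{\sqrt n,\,D/2\}\ge\tfrac12\max\{\sqrt n,\,D\}$. Consequently
$$ f \;\le\; \frac{n}{k} \;\le\; \frac{2n}{\max\{\sqrt n,\,D\}} \;=\; 2\min\Bigl\{\tfrac{n}{\sqrt n},\,\tfrac{n}{D}\Bigr\} \;=\; 2\min\{\sqrt n,\,n/D\}, $$
which is $O(\min\{\sqrt n,\,n/D\})$, exactly as asserted.

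This is an immediate consequence of the fragment-size and partition guarantees, so there is no genuine obstacle; the only points requiring care are (i)~using the \emph{lower} bound $|V(F)|\ge k$ together with the disjointness property, since it is precisely disjointness that turns the cover into the equality $\sum_F|V(F)|=n$ and thereby lets the inequality run in the correct direction, and (ii)~tracking the constant factor introduced by the $2$-approximation $d$ of $D$, which is harmlessly absorbed into the $O(\cdot)$ notation.
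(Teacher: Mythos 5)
Your proof is correct and follows essentially the same route as the paper, whose entire justification is the one-line remark that the fragments are disjoint and each contains more than $k$ vertices, hence $f\le n/k$ with $k=\Theta(\sqrt n+D)$. The extra bookkeeping you supply---the identity $\sum_{F\in\cF}|V(F)|=n$ and the constant factor from the $2$-approximation $d$ of $D$---is exactly the arithmetic the paper leaves implicit.
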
 

The first two steps of phase 2, namely, steps \ref{count_fragments}
and \ref{count_inter_fragment} consist of simple broadcasts over the BFS tree, hence they require 
$O(D)$ time and $O(m)$ messages. Step \ref{send_Tf} consists of upcasting all 
the edges in $T_{\cF}$ 
to $s$. Note that the number of edges in $T_{\cF}$ can potentially be as high as $m$. However, given that the verification did not fail in Step 
\ref{count_inter_fragment}, we are guaranteed that the number of edges in $T_{\cF}$ is $f-1$.
Thus, Step \ref{send_Tf} amounts to upcasting $f-1$ messages and can therefore be performed in $O(D+f)$ time and $O(f D)$ messages.
Step \ref{route}  consists of an upcast of $f$ messages over the BFS tree and thus requires the same time and message complexities as that of step \ref{send_Tf}.
Step \ref{encode} consists of a BFS downcast of $f$ messages (each of size 
$\log^2n$); it therefore requires $O(D+f \log n)$ time and $O(f D\log n )$ messages. Step \ref{send_labels_leaders} consists of a broadcast of a label (of size $O(\log^2 n)$) in each of the MST fragments; this can be performed in  $O(k+\log n)$ time and
$O(n \log n)$ messages. Finally, step \ref{send_labels_neig} can be performed in $O(\log n)$ time and $O(m\log n)$ messages, and step 
\ref{verify_cycle_heavy} amounts to  local computations.
Table~1 below
summarizes the time and message complexities of the second and third phases 
of the algorithm.

\begin{table}[h!]
\label{t:table}
\noindent
\begin{tabular}{|l||*{3}{l|}}
\hline
Step
&\makebox[5em]{Description}&\makebox[5em]{Time}&\makebox[5em]{Messages}
\\\hline\hline
\ref{count_fragments},\ref{count_inter_fragment}& BFS convergecast & $O(D)$&$O(m)$ 

\\ \hline
\ref{send_Tf}, \ref{route} & BFS upcast of $f$ messages&$O(D+f)$&$O(f\cdot D)$ 

\\ \hline
\ref{verify_tree},\ref{verify_cycle_heavy} & Local computation&$0$ & none 

\\ \hline
\ref{encode} & BFS downcast of $f$ messages (each of size 
$\log^2n$)&$O(D+f\cdot \log n)$&$O(f\cdot \log n\cdot D)$ \\ \hline
\ref{send_labels_leaders} & Broadcast in each of the MST fragments & $O(k+\log n)$ & 
$O(n \cdot\log n)$

\\ \hline
 \ref{send_labels_neig} & 
Communication between neighbors & 
$O(\log n)$ & $O(\log n \cdot m)$ 
\\\hline
\end{tabular}
\caption{Time and message complexities of phases 2 and 3 of the algorithm. 
$D$ is the diameter of the graph, and $f$ is the number of MST\ fragments 
constructed in phase 1.}
\end{table}

\noindent
Combining 
the above arguments, and using the fact that $f =O(\min\{\sqrt{n}, \frac{n}{D}\})$ (see Observation \ref{num_of_fragments}), we obtain the following. 

\begin{lemma}
The  algorithm describes above requires $\tilde{O}(\sqrt{n}+D)$  time
and $\tilde{O}(m)$ messages.
\end{lemma}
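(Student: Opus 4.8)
The plan is to combine the per-step bounds already established in the Complexity subsection, and in particular the entries of Table~1, exploiting the fact that the algorithm executes only a constant number of phases and steps sequentially. Hence it suffices to bound each individual step by $\tO(\sqrt{n}+D)$ time and $\tO(m)$ messages and then add the contributions. Phase~1 has already been shown to cost $\tO(\sqrt{n}+D)$ time and $\tO(m)$ messages (dominated by the invocation of Procedure $\DOM$ with $k=\Theta(\sqrt{n}+D)$). The remaining task is therefore to check that every entry of Table~1 also meets these two targets once the values of $k$ and $f$ are substituted.

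The decisive tool is Observation~\ref{num_of_fragments}, which yields the two-sided bound $f \le n/k = O(\min\{\sqrt{n}, n/D\})$. The essential point is that the two sides of this minimum are used for the two different complexity measures. For the \emph{time} complexity I would use $f = O(\sqrt{n})$: the only nontrivial time terms in Table~1 are $O(f)$ (Steps~\ref{send_Tf},~\ref{route}), $O(f\log n)$ (Step~\ref{encode}), and the fragment broadcast term $O(k+\log n)$ (Step~\ref{send_labels_leaders}). Since $f = O(\sqrt{n})$ and $k = \Theta(\sqrt{n}+D)$, each of these is $\tO(\sqrt{n}+D)$, and the $O(D)$ terms are trivially within budget; summing over the constantly many steps preserves the $\tO(\sqrt{n}+D)$ bound.

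For the \emph{message} complexity I would instead use $f = O(n/D)$. The nontrivial message terms are $O(f\cdot D)$ (Steps~\ref{send_Tf},~\ref{route}) and $O(f\cdot D\log n)$ (Step~\ref{encode}); substituting $f = O(n/D)$ collapses these to $O(n)$ and $O(n\log n)$ respectively. Because the graph is connected we have $m \ge n-1$, so $O(n) = O(m)$ and $O(n\log n) = \tO(m)$. The remaining message terms, namely $O(m)$ (Steps~\ref{count_fragments},~\ref{count_inter_fragment}), $O(n\log n)$ (Step~\ref{send_labels_leaders}), and $O(m\log n)$ (Step~\ref{send_labels_neig}), are all clearly $\tO(m)$ under the same observation $m \ge n-1$. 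Adding the constantly many contributions keeps the total at $\tO(m)$.

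The main obstacle---or rather the one genuinely non-mechanical point---is recognizing that the single bound $f \le n/k$ must be exploited in its two extreme forms simultaneously: the $\sqrt{n}$ side controls the $f$-dependent time terms, while the $n/D$ side is exactly what is needed to turn the $f\cdot D$ message terms into $O(n)=O(m)$. Everything else is the routine summation of a fixed number of step costs, each individually verified against the two target bounds.
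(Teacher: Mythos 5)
Your proposal is correct and follows essentially the same route as the paper: the paper's proof is precisely the per-step analysis of Table~1 combined with Observation~\ref{num_of_fragments}, concluded by the remark that $f = O(\min\{\sqrt{n}, n/D\})$. In fact, you make explicit the one point the paper leaves implicit---that the $\sqrt{n}$ side of the minimum handles the $f$-dependent time terms while the $n/D$ side turns the $f\cdot D$ message terms into $O(n)=O(m)$---which is a faithful (and slightly more careful) rendering of the intended argument.
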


Theorem \ref{thm:upper-bound} follows by combining the
 lemma above   with Lemma \ref{lem:correct}.

\section{Time Complexity Lower Bound}
\label{sec:time-lb}

In this section, we prove an $\tilde{\Omega}(\sqrt{n} + D)$ lower bound 
on the time required to solve the MST verification. 
Clearly, $\Omega(D)$ time is inevitable in the case of a single source node. 
The case in which all nodes start at the same time is also very simple.
Indeed, in this case, the $\Omega(D)$ time bound follows by considering 
a cycle $C$ of size $n$ with all edges having weight 1 except for two edges 
$e$ and $e'$ located at opposite sides. The given candidate spanning tree is 
$T=C\setminus \{e\}$. The two edges $e$ and $e'$ can have arbitrary weights, 
hence, to decide whether $T$ is an MST one needs to transfer information 
along at least half of the cycle.
This shows a lower bound of $\Omega(D)$ for the case where $D=\Theta(n)$. 
A similar argument can be applied for arbitrary values of $D$.

The difficult part is to prove a time lower bound of $\tilde{\Omega}(\sqrt{n})$.
We prove this lower bound for the case where all nodes start at the execution 
at the same time. The lower bound for the case of a single source node follows 
directly from this lower bound, relying on the leader election algorithm 
of \cite{KPPRT12}, which runs  in $\tilde{O}(D)$ time.

In the remaining of this section we consider the case where all nodes start the execution simultaneously at the same time, and prove a lower bound of $\tilde{\Omega}(\sqrt{n})$.
To show the lower bound, 
we first define a new problem named {\em vector equality}, and then show 
a lower bound for the time required to solve it. 
This is established in Section \ref{sec:VEprob}.
More specifically, for the purposes of this lower bound proof, we consider the collection of graphs denoted $F^2_m$, for
$m\ge 2$, as defined in \cite{PR_00}. Each graph graph $F^2_m$ consists of
$n = \Theta(m^4)$ vertices, and its diameter is $\Theta(m)$. In Section \ref{subsec:ITN}, we establish a time lower bound 
of $\tilde{\Omega}(m^2)=\tilde{\Omega}(\sqrt{n})$ for solving vector equality in each graph $F^2_m$.

In Section \ref{sec:lb_ver_prob}, for
$m\ge 2$, we consider a family $\cJ^2_m$ of weighted
graphs (these are weighted versions of the graph $F^2_m$), and show that any algorithm solving 
the MST verification problem on the graphs in $\cJ^2_m$ can also be used to 
solve the vector equality problem on $F^2_m$ with the same time complexity.
This establishes the desired $\tilde{\Omega}(\sqrt{n})$ time lower bound for the distributed MST verification problem.

\subsection{A lower bound for the vector equality problem}
\label{sec:VEprob}

\paragraph{The vector equality problem $\EQ(G,s,r,b)$.}
Consider a graph $G$ and two specified distinguished vertices $s$ and $r$, 
each storing $b$ boolean variables, 
$\bar{X}^s=\langle X_1^{s}, \dots, X_b^{s} \rangle $ and
$\bar{X}^r=\langle X_1^{r}, \dots,X_b^{r} \rangle $ respectively,
for some integer $b \ge 1$. An instance of the problem 
consists of initial assignments
$\INPUTS  = \{X_i^{s} \mid 1 \le i \le b\}$ and 
$\INPUTR = \{X_i^{r} \mid 1 \le i \le b\}$, 
where $X_i^{s},X_i^{r} \in \{0,1\}$,
to the variables of $s$ and $r$ respectively. Given such
an instance, the vector equality problem requires $r$ to decide whether 
$\bar{X}^s=\bar{X}^r$,  i.e., $X_i^s=X_i^r$ for every $1\le i\le b$.  

\subsubsection{The graphs $F^2_m$}
\label{ss:graphs}

Let us recall the collection of graphs denoted $F^2_m$, for
$m\ge 2$, as defined in \cite{PR_00}. (See Figure \ref{f:Ti}). The components used 
are the {\em ordinary path} $\cP$ on $m^2+1$ vertices, where
$V(\cP)= \{v_0,\dots,v_{m^2}\}$, 
$E(\cP) = \{(v_i, v_{i+1}) \mid 0 \le i \le m^2-1\}$,
and the {\em highway} $\cH$ on $m+1$ vertices, where
$V(\cH)= \{h_{im} \mid 0\le i\le m\}$ and
$E(\cH) = \{(h_{im},h_{(i+1)m}) \mid 0 \le i \le m-1\}$.
Each highway vertex $h_{im}$ is connected to the corresponding 
path vertex $v_{im}$ by a {\em spoke} edge $(h_{im},v_{im})$. 

The graph $F^2_m$ is constructed of $m^2$ copies of the path, 
$\cP^1,\dots,\cP^{m^2}$, and connecting them to the highway $\cH$.
The two distinguished vertices are $s=h_0$ and $r=h_{m^2}$.
The spoke edges are grouped into $m+1$ {\em stars} $S_i$, $0 \le i \le m$, 
with each $S_i$ consisting of the vertex $h_{im}$ and the $m^2$ vertices 
$v_{im}^1,\dots,v_{im}^{m^2}$ connected to it by spoke edges. Hence
$V(S_i) = \{h_{im}\} \cup \{v_{im}^1,\dots,v_{im}^{m^2}\}$
and $E(S_i) = \{(v^j_{im},h_{im}) \mid 1\le j \le m^2 \}$.
The graph $F^2_m$ consists of
$V(F^2_m) = V(\cH) \cup \bigcup_{j=1}^{m^2} V(\cP^j)$
and $E(F^2_m) = \bigcup_{i = 0}^{m} E(S_i) \cup
\bigcup_{j=1}^{m^2} E(\cP^j) \cup E(\cH)$.
Thus $F^2_m$ has $n = \Theta(m^4)$ vertices and diameter $\Theta(m)$.

\subsubsection{The lower bound for $EQ$}
\label{subsec:ITN}

Our goal now is to prove that solving the vector equality problem on $F_m^2$ with a $b=m^2$-bit input strings $\INPUT$ requires
$\Omega(m^2/B)$ time, assuming each message is encoded using $B$ bits. (Later, we shall take $B=O(\log n)$.)
\newline\indent
Consider some arbitrary algorithm $\cA_{EQ}$, and let $\varphi(\INPUT)$
denote the execution of $\cA_{EQ}$ on $m^2$-bit inputs $\INPUT$ on
the graph $F_m^2$. For simplicity, we assume that $m^2/2$ is an integer. For $1 \le i \le m^2/2$, 
define the {\em i-middle set}
of the graph $F_m^2$, denoted $M_i$, as follows. For every $1\le
j\le m^2$, define the {\em i-middle} of the path $\cP^j$ as
$M_i(\cP^j)=\{ v^j_l \mid i \le l \le m^2-i \}$.
Let $\beta(i)$ denote the least integer $\delta$ such that
$\delta m \ge i$ and $\gamma(i)$ denote the largest integer $\kappa$ such that 
$\kappa m \le m^2 -i$. Define the {\em i-middle} of $\cH$ as
$M_i(\cH)=\{ h_{jm} \mid \beta(i)\le j\le \gamma (i)\}$.
Now, the {\em i-middle set} of $F_m^2$ is the union of those middle sets,
$M_i = M_i(\cH) \cup \bigcup_j M_i(\cP^j).$
(See Fig.~\ref{f:Ti}.)
For $i=0$, the definition is slightly different, letting
$M_0=V\setminus \{h_0,h_{m^2}\}$.

\begin{figure} [htb]
\begin{center}

\centerline{\includegraphics[scale=0.60]{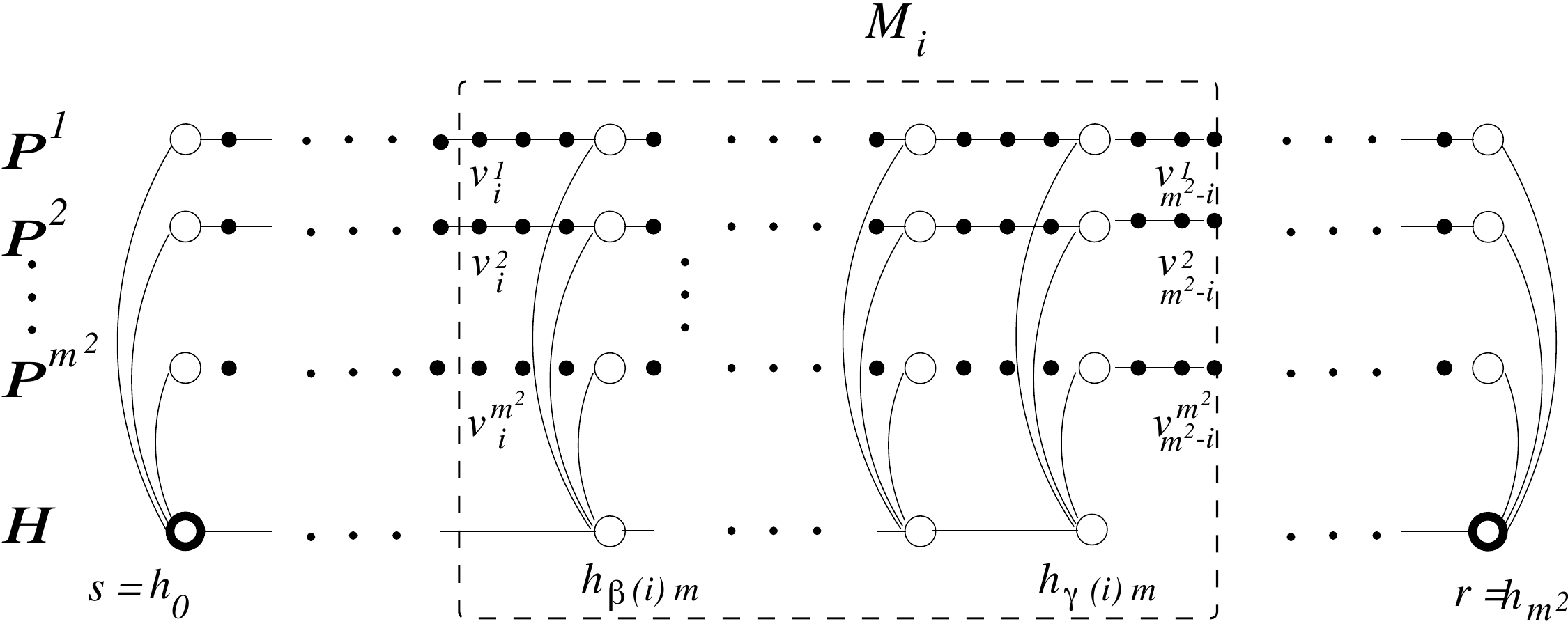}}

\caption[]{
\label{f:Ti}
\sf
The middle set $M_i$ in the graph $F^2_m$ }
\end{center}
\end{figure}

Denote the {\em state} of the vertex $v$ at the beginning of round $t$
during the execution $\varphi(\INPUT)$ on the input $\INPUT$ by
$\sigma(v,t,\INPUT)$. Without loss of generality, we may assume that in two different executions $\varphi(\INPUT)$ and
$\varphi(\INPUTPrime)$, a vertex reaches the same state at time
$t$, i.e., $\sigma(v,t,\INPUT) = \sigma(v,t,\INPUTPrime)$, 
if and only if  it receives the same sequence of messages on each of its incoming links; for different sequences, the states are distinguishable.
\newline\indent
For a given set of vertices $U = \{v_1, \dots, v_l\} \subseteq V$,
a {\em configuration} $C(U,t,\INPUT)$ is a vector
$\langle \sigma(v_1,t,\INPUT), \dots, \sigma(v_l, t, \INPUT) \rangle$
of the states of the vertices of $U$ at the beginning of
round $t$ of the execution $\varphi(\INPUT)$.
Denote by $\cC[U, t]$ the collection of all possible
configurations of the subset $U \subseteq V$ at time $t$ over all
executions $\varphi(\INPUT)$ of algorithm $\cA_{EQ}$ (i.e., on all legal
inputs $\INPUT$), and let $\rho[U,t] = |\cC[U,t]|$.
\newline\indent
Prior to the beginning of the execution (i.e., at the beginning of
round $t=0$), the input strings $\INPUT$ are known only to vertices 
$s$ and $r$ respectively. The rest of the vertices
are found in some initial state, described by the configuration
$C_{init} = C(M_0,0,\INPUT)$, which is independent of $\INPUT$.
Thus in particular $\rho[M_0,0] = 1$.
(Note, however, that $\rho[V,0] = 2^{2m^2}$.)

\begin{lemma}
\label{step}
For every $0 \le t < m^2/2$,
$\rho[M_{t+1},t+1] \le (2^B+1)^2 \cdot \rho[M_t,t]$.
\end{lemma}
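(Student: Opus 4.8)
The plan is to bound how much the number of distinguishable configurations of the middle set can grow in a single round. The key observation driving the proof is that the $i$-middle set $M_i$ shrinks by exactly the ``boundary layer'' as $i$ increases: the only vertices in $M_t$ that are not in $M_{t+1}$ are those adjacent to the outside of $M_{t+1}$, and crucially, information can enter $M_{t+1}$ only through a controlled number of links. I would begin by identifying precisely which edges cross into $M_{t+1}$ from vertices outside $M_{t+1}$ (equivalently, from $M_t \setminus M_{t+1}$ together with whatever lies just beyond). On each ordinary path $\cP^j$, the $i$-middle $M_i(\cP^j)$ is the contiguous sub-path from $v^j_i$ to $v^j_{m^2-i}$, so advancing $i$ by one removes exactly the two endpoints $v^j_i$ and $v^j_{m^2-i}$; the ``new'' boundary of $M_{t+1}(\cP^j)$ receives path-messages only at its two ends.

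\textbf{The counting step.} The heart of the argument is that the configuration $C(M_{t+1}, t+1, \INPUT)$ is determined by two ingredients: (1) the configuration $C(M_t, t, \INPUT)$ at the previous round, since every vertex that is interior to $M_{t+1}$ (i.e., has all its neighbors inside $M_t$) computes its round-$(t+1)$ state purely from states already counted in $\rho[M_t,t]$; and (2) the messages that flow \emph{inward} across the boundary of $M_{t+1}$ from vertices lying outside $M_t$. By the definition of $M_i$ (with the highway middle indexed by $\beta(i)$ and $\gamma(i)$ so that it steps in by whole highway-edges), I expect exactly \emph{two} such inward-crossing links to matter: one near the $s$-end and one near the $r$-end of the structure, because the highway contributes at most one crossing edge at each end and the simultaneous retreat of all the path middles is already accounted for within $M_t$. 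Each such link carries a $B$-bit message, giving $2^B$ possibilities, plus the ``silent round'' (no message) possibility, for a factor of $(2^B+1)$ per link. Two independent boundary links then yield the claimed factor $(2^B+1)^2$.

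\textbf{Making the map well-defined.} I would formalize (1) and (2) as an injection: define a map that sends each configuration in $\cC[M_{t+1}, t+1]$ to the pair consisting of its ``parent'' configuration in $\cC[M_t, t]$ together with the pair of boundary messages. Using the stated assumption that a vertex's state at time $t+1$ is a deterministic function of its state at time $t$ and the multiset of messages received on its incoming links, I would argue that two executions agreeing on the parent configuration and on both boundary messages must agree on all of $M_{t+1}$ at round $t+1$. This shows the map is injective, hence $\rho[M_{t+1},t+1] \le (2^B+1)^2 \cdot \rho[M_t,t]$.

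\textbf{The main obstacle} I anticipate is pinning down that the number of inward-crossing links is genuinely bounded by two, independent of $m$. This is where the specific geometry of $F^2_m$ and the exact indexing via $\beta(i), \gamma(i)$ must be used carefully: one must verify that although $M_{t+1}$ retreats from $M_t$ on \emph{every} one of the $m^2$ paths at once, each removed path-vertex lies inside $M_t$ (so its round-$t$ state is already counted), and therefore the only genuinely \emph{new} external information crossing into $M_{t+1}$ arrives along the highway at the two ends. I would treat the $t=0$ case separately, since $M_0 = V \setminus \{h_0, h_{m^2}\}$ is defined differently, and there the two boundary links are precisely the edges incident to the two excluded distinguished vertices $s=h_0$ and $r=h_{m^2}$, which is exactly the source of the two-factor and ties the bound to the input-bearing vertices.
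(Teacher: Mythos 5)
Your proposal is correct and follows essentially the same argument as the paper: the messages entering $M_{t+1}$ along the $2m^2$ path edges are fixed by the round-$t$ configuration of $M_t$ (their senders lie in $M_t$), so the only undetermined information enters over the two extreme highway edges, each contributing a factor of at most $2^B+1$ (a $B$-bit message or silence), giving the branching factor $(2^B+1)^2$. Your injection formalization and separate treatment of $t=0$ are harmless refinements of the paper's ``each configuration branches off into at most $(2^B+1)^2$ configurations'' counting, not a different route.
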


\begin{proof}
The lemma is proved by showing that in round $t+1$ of the algorithm,
each configuration in $\cC[M_t,t]$ branches off into at most $(2^B+1)^2$
different configurations of $\cC[M_{t+1},t+1]$.

Fix a configuration $\hC \in \cC[M_t,t]$, and let $\delta=\beta(t+1)$ and 
$\kappa=\gamma(t+1)$.
The $(t+1)$-middle set $M_{t+1}$ is connected to the rest of the graph by the
highway edges $f_{\delta-1} = (h_{(\delta-1)m}, h_{\delta m})$ and 
$f_{\kappa} = (h_{\kappa m}, h_{(\kappa+1) m})$
and by the $2m^2$ path edges 
$e^j_t = (v_t^j,v_{t+1}^j)$,$e^j_{m^2-t} = (v_{m^2-t}^j,v_{m^2-t+1}^j)$
$1\le j\le m^2$. (See Fig. \ref{f:Tt+1}.)
\begin{figure} [htb]
\begin{center}

\centerline{\includegraphics[scale=0.6]{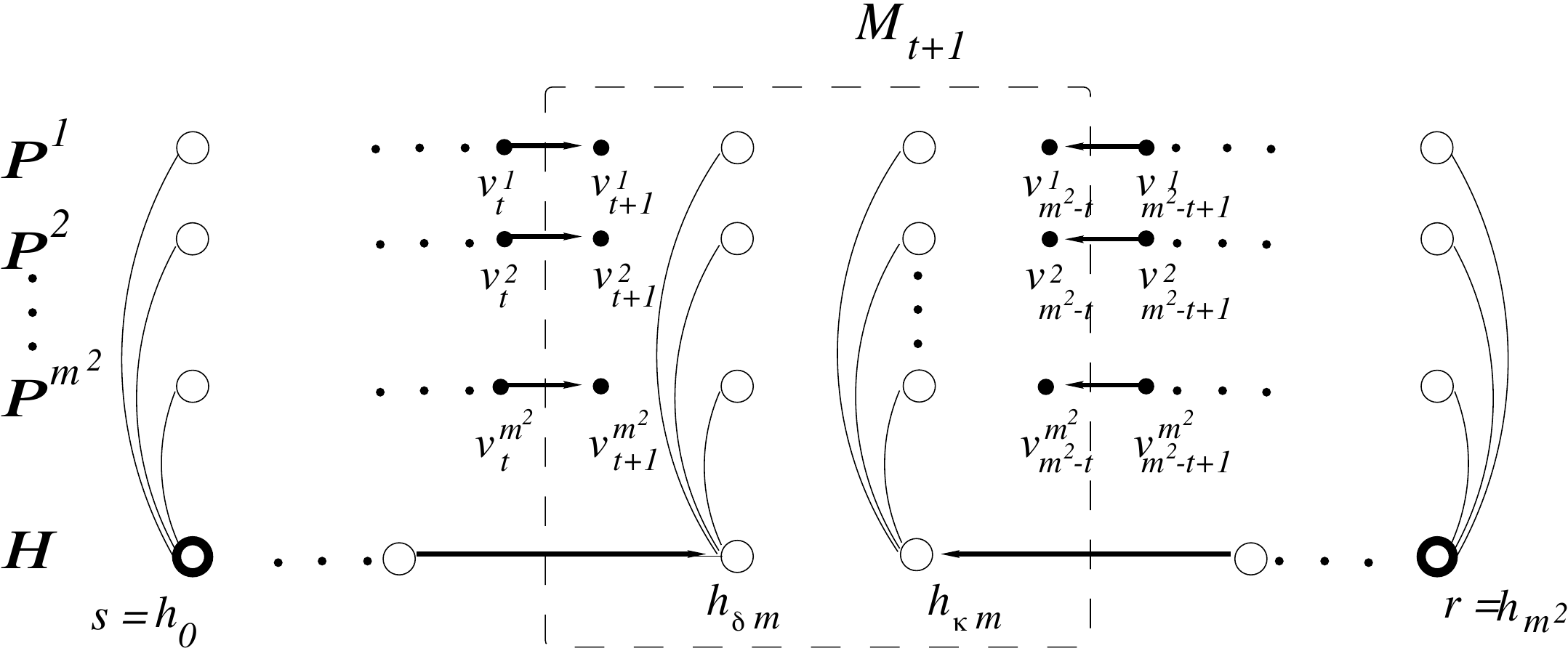}}

\caption[]{
\label{f:Tt+1}
\sf
The edges entering the middle set $M_{t+1}$. }
\end{center}
\end{figure}

Let us count the number of different configurations in
$\cC[M_{t+1},t+1]$
that may result of the configuration $\hC$.
Starting from the configuration $\hC$, each vertex $v_t^j$ is
restricted
to a single state, and hence it sends a single (well determined)
message over the edge $e^j_t$ to $v_{t+1}^j$
thus not introducing any divergence in the execution. Similarly, each vertex 
$v_{m^2-t+1}^j$  is restricted to a single state, and hence it sends a single
message over the edge $e^j_{m^2-t}$ to $v_{m^2-t}^j$.
The same applies to all the edges internal to $M_{t+1}$.
As for the highway edges $f_{\delta-1}, f_{\kappa}$, the vertices 
$h_{(\delta-1)m}$  and $h_{(\kappa+1)m}$ are not in the set $M_t$, 
hence they may be in any one of many possible states, 
and the values passed over these edges into the set $M_{t+1}$
are not determined by the configuration $\hC$. However, due to the
restriction of the $B$-bounded-message model, at most $(2^B+1)$
different behaviors of $f_{\delta-1}$ can be observed by $h_{\delta m}$ 
and at most $(2^B+1)$ different behaviors of $f_{\kappa}$ can be observed 
by $h_{\kappa m}$.
Thus altogether, the configuration $\hC$ branches off into at most
$(2^B+1)^2$ possible configurations 
$\hC_1,\dots,\hC_{2^B+1}^2 \in \cC[M_{t+1},t+1]$, 
differing only by the states $\sigma(h_{\delta m}, t+1, \INPUT)$, 
$\sigma(h_{\kappa m}, t+1, \INPUT)$.
The lemma follows.
\end{proof}

Applying Lemma \ref{step} and the fact that $\rho[M_0,0] = 1$,
we get the following result. 

\begin{corollary}
\label{cl:bound-rho}
For every $0 \le t \le m^2/2$, $\rho[M_t,t] \le (2^B+1)^{2t}$.
\end{corollary}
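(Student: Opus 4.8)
The plan is to prove the bound by a straightforward induction on $t$, using Lemma \ref{step} as the inductive engine and the observation $\rho[M_0,0] = 1$ as the base case. Since Lemma \ref{step} already supplies the multiplicative recurrence $\rho[M_{t+1},t+1] \le (2^B+1)^2 \cdot \rho[M_t,t]$, the corollary is essentially the ``unrolling'' of this recurrence, and no new combinatorial content is required.

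First I would establish the base case. At $t=0$, the preceding discussion notes that all vertices except $s$ and $r$ share a single $\INPUT$-independent initial configuration $C_{init}$, so $\rho[M_0,0]=1$. This matches the claimed bound, since $(2^B+1)^{2\cdot 0} = (2^B+1)^0 = 1$.

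Next I would carry out the inductive step. Fix $t$ with $0 \le t < m^2/2$ and assume inductively that $\rho[M_t,t] \le (2^B+1)^{2t}$. Applying Lemma \ref{step} to pass from round $t$ to round $t+1$, and then substituting the inductive hypothesis, yields
$$
\rho[M_{t+1},t+1] ~\le~ (2^B+1)^2 \cdot \rho[M_t,t] ~\le~ (2^B+1)^2 \cdot (2^B+1)^{2t} ~=~ (2^B+1)^{2(t+1)}~,
$$
which is exactly the desired bound at $t+1$. The range condition $0 \le t < m^2/2$ needed to invoke Lemma \ref{step} is precisely what is available throughout the induction for $t+1 \le m^2/2$, so the argument covers every index in the claimed range.

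There is no genuine obstacle here: the only point requiring a moment's care is bookkeeping on the index range, namely checking that Lemma \ref{step} is applied only for $t < m^2/2$ so that the terminal index $t = m^2/2$ is reached correctly, and that the exponent arithmetic $(2^B+1)^2 \cdot (2^B+1)^{2t} = (2^B+1)^{2(t+1)}$ is clean. Once the base case and the single inductive step are in place, the corollary follows immediately.
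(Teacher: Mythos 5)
Your proof is correct and follows exactly the paper's route: the paper derives Corollary \ref{cl:bound-rho} by combining Lemma \ref{step} with the fact $\rho[M_0,0]=1$, which is precisely the induction you spell out. Your write-up just makes explicit the base case and the unrolling of the recurrence that the paper leaves implicit.
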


To prove the time lower bound for the vector equality problem we introduce 
the following restricted model of computation as defined in \cite{KUS_97}.
A {\em two party communication model} consists of parties $\partyA$ 
and $\partyB$ connected via a bidirectional communication link. 
Let $S_X,S_Y,S_Z$ be arbitrary finite sets. We say that Protocol $\Pi$ computes 
function $f:S_X\times S_Y \rightarrow S_Z$  if, when given input $a\in S_X$ 
known to party $\partyA$ and input $b\in S_Y$ known to party $B$, 
the parties are able to determine $f(a,b)$  by a sequence of bit exchanges. 
The following observation follows from \cite[1.14]{KUS_97}, and is used to prove Lemma \ref{l:EQ-LB}.

\begin{observation} \label{l:similar-executions}
Consider a protocol $\Pi $.  Suppose that there exist inputs 
$a,a'\in S_x$ for party $\partyA$ and $b,b'\in S_Y$ for party $\partyB$ 
for which   in the executions of $\Pi $ on input $(a,b)$ and 
on input $(a',b')$, identical sequences of bits are exchanged by both parties. 
Then  the same sequence of bits is exchanged in the execution of 
$\Pi $ on input $(a',b)$.
\end{observation}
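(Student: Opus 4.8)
The statement to prove is Observation~\ref{l:similar-executions}, a standard "cut-and-paste" (fooling-set) lemma in two-party communication complexity. The plan is to argue by induction on the sequence of rounds (bit exchanges) that the transcript produced on the mixed input $(a',b)$ coincides, round by round, with the common transcript produced on $(a,b)$ and on $(a',b')$. The key structural fact exploited is that in each round, the bit sent by a party is a deterministic function of only that party's own input together with the bits it has seen so far; hence a party cannot distinguish whether its partner holds one input or another so long as the incoming bit-sequence is unchanged.

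\medskip
\noindent\textbf{Setup and induction hypothesis.} Let $\tau = \langle c_1, c_2, \dots, c_L\rangle$ denote the common sequence of bits exchanged in the executions $\Pi(a,b)$ and $\Pi(a',b')$, and write $\tau_{\le k} = \langle c_1,\dots,c_k\rangle$ for its length-$k$ prefix. Without loss of generality the parties alternate, say $\partyA$ sending on odd rounds and $\partyB$ on even rounds (a general interleaving is handled identically, since whose turn it is to speak is itself determined by the common prefix). I would prove by induction on $k$ the claim that in the execution $\Pi(a',b)$, the first $k$ bits exchanged are exactly $\tau_{\le k}$. The base case $k=0$ is vacuous.

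\medskip
\noindent\textbf{Inductive step.} Suppose the first $k$ bits exchanged in $\Pi(a',b)$ agree with $\tau_{\le k}$, and consider round $k+1$. If it is $\partyA$'s turn, then the bit $\partyA$ sends is determined by $\partyA$'s input and the bits $\partyA$ has received so far, namely $\tau_{\le k}$. In the execution $\Pi(a',b')$, party $\partyA$ also holds $a'$ and has, by hypothesis, also seen the prefix $\tau_{\le k}$; since that execution produces $\tau$, the bit it emits there is $c_{k+1}$. As $\partyA$'s behaviour depends only on $(a', \tau_{\le k})$ and both quantities coincide across the two executions, $\partyA$ emits $c_{k+1}$ in $\Pi(a',b)$ as well. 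The symmetric argument applies when it is $\partyB$'s turn: there $\partyB$ holds $b$ and has seen $\tau_{\le k}$, exactly as in the execution $\Pi(a,b)$ (which holds input $b$ for $\partyB$ and produces $\tau$), so $\partyB$ emits $c_{k+1}$. In either case the $(k{+}1)$st bit in $\Pi(a',b)$ is $c_{k+1}$, completing the induction. Hence the full sequence of bits exchanged on $(a',b)$ is $\tau$, which is precisely the claim.

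\medskip
\noindent\textbf{Main obstacle.} The argument is entirely elementary; the only subtlety to state carefully is \emph{which} input each party depends on in the mixed run and why its incoming view matches one of the two given runs. The crux is that $\partyA$ in $\Pi(a',b)$ is indistinguishable from $\partyA$ in $\Pi(a',b')$ (same own input, same received prefix), while $\partyB$ in $\Pi(a',b)$ is indistinguishable from $\partyB$ in $\Pi(a,b)$; the induction is what synchronizes these two "borrowings" so that the received prefixes genuinely match at every round. I would also remark that the determinism assumption is essential and is exactly the content of the cited fact \cite[1.14]{KUS_97}, so no additional hypotheses are needed.
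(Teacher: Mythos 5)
Your proof is correct. The paper gives no proof of this observation at all---it is stated as an immediate consequence of \cite[1.14]{KUS_97}, i.e., the rectangle (combining) property of deterministic two-party protocols---and your induction on the common transcript prefix, with $P_A$'s behaviour borrowed from the run on $(a',b')$ and $P_B$'s from the run on $(a,b)$, is exactly the standard argument underlying that citation, so the two approaches coincide.
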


\begin{lemma}
\label{l:EQ-LB}
For every $m\ge 1$, solving the vector equality problem 
$\EQ(F^2_{m},h_0,h_{m^2},m^2)$ requires $\Omega(m^2/B)$ time.
\end{lemma}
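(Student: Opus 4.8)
The plan is to reduce the two-party communication complexity lower bound for the equality function to the distributed vector equality problem on $F^2_m$, using the configuration-counting machinery already developed. The two-party equality function $\EQ:\{0,1\}^{m^2}\times\{0,1\}^{m^2}\to\{0,1\}$ is well known to require $\Omega(m^2)$ bits of communication in the worst case, even for deterministic protocols; I would invoke (or quickly recall) this classical fact, since it follows from a standard fooling-set argument (the $2^{m^2}$ pairs $(x,x)$ form a fooling set). The goal is to show that a fast distributed algorithm $\cA_{EQ}$ on $F^2_m$ would yield a two-party protocol $\Pi$ for equality whose total communication is too small, contradicting this bound.

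The heart of the argument is a simulation. Party $\partyA$ holds $\INPUTS$ and simulates the portion of $F^2_m$ on the ``$s$ side'' of the middle set together with the evolving states of $M_t$, while party $\partyB$ holds $\INPUTR$ and simulates the ``$r$ side.'' First I would use Corollary \ref{cl:bound-rho}: after running $\cA_{EQ}$ for $T$ rounds, the number of distinct configurations of the middle set satisfies $\rho[M_T,T]\le (2^B+1)^{2T}$. The crucial observation is that the state of the distinguished output vertex $r=h_{m^2}$ at the end of the execution is a function of the configuration of the middle set alone, because (as established in the proof of Lemma \ref{step}) all information flowing from the $s$-side into $r$ must pass through the cut separating $M_t$ from the rest, and within $T<m^2/2$ rounds the influence of $s$ on $r$ is mediated entirely by the middle-set configurations. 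I would make this precise by arguing that $\partyA$ and $\partyB$ can reconstruct the outcome by exchanging only a name of the relevant configuration, so the total bits exchanged is $O(\log \rho[M_T,T]) = O(T\cdot B)$.

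The cleanest way to formalize the simulation, and where Observation \ref{l:similar-executions} enters, is the contrapositive: suppose $\cA_{EQ}$ terminates in $T$ rounds with $T \le c\,m^2/B$ for a small constant $c$. Then $\rho[M_T,T]\le (2^B+1)^{2T} < 2^{m^2}$, so by pigeonhole two distinct inputs $\INPUT$ and $\INPUTPrime$ (with, say, $\indVarS \ne \bar{X'}^s$ but both diagonal instances where $\indVarS=\indVarR$ and $\bar{X'}^s=\bar{X'}^r$) induce the same configuration of $M_T$, hence the same sequence of messages crossing the cut. Translating to the two-party protocol, this means identical bit sequences are exchanged on inputs $(a,b)$ and $(a',b')$; Observation \ref{l:similar-executions} then forces the same transcript on the mixed input $(a',b)$, where $\bar{X'}^s\ne\indVarR$, so $r$ produces the same (incorrect) answer on a YES and a NO instance — contradicting correctness of $\cA_{EQ}$.

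The main obstacle I anticipate is making the reduction from the distributed model to the two-party model airtight: one must specify exactly how to partition the vertices of $F^2_m$ between $\partyA$ and $\partyB$ so that the only link crossing the partition corresponds to the bounded cut controlling $M_t$, and verify that each party can carry out its share of the simulation knowing only its own input plus the messages received across that single link. Care is needed because the middle set shrinks as $t$ grows (its boundary edges $f_{\delta-1}, f_\kappa$ and the path edges $e^j_t$ move inward), so the cut is time-dependent; I would fix the partition at the final time $T$ and argue that the number of boundary crossings over all $T$ rounds is still $O(T\cdot B)$ bits, which is exactly what Corollary \ref{cl:bound-rho} bounds. Once the simulation and the configuration count are lined up, combining $O(T\cdot B)$ communication with the $\Omega(m^2)$ equality lower bound yields $T=\Omega(m^2/B)$, as claimed.
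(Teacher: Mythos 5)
Your ``contrapositive'' formalization is exactly the paper's proof: fix the partition $L_{T}\cup M_{T}\cup R_{T}$ at the final time $T$, let $\partyA$ simulate $L_{T}\cup M_{T}$ and $\partyB$ simulate $M_{T}\cup R_{T}$, bound the number of middle-set configurations by Corollary \ref{cl:bound-rho}, pigeonhole over the $2^{m^2}$ diagonal instances to find $x\ne x'$ inducing the same configuration of $M_{T}$ (hence the same cut-crossing transcript), and use Observation \ref{l:similar-executions} to transfer that transcript to the mixed instance, so that $r$ answers identically on a YES and a NO instance. That argument is correct and is the same route the paper takes.

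One caveat: the subsidiary claims in your second paragraph are wrong and should be deleted, even though your final argument does not rely on them. It is not true that $r$'s final state is a function of the middle-set configuration alone (it also depends on $\INPUTR$), and it is not true that $\partyA$ and $\partyB$ can run the simulation exchanging only $O(\log\rho[M_T,T])=O(T\cdot B)$ bits. Neither party can compute ``the name of the relevant configuration'' unilaterally, since the configuration depends on both inputs; the honest simulation exchanges the actual messages crossing the cut, which is up to $(m^2+1)B$ bits per round, and these messages genuinely depend on the inputs (information from $r$ reaches the cut vertices quickly via the highway and the spokes, long before round $T$). So the simulation is \emph{not} a low-communication two-party protocol, and you cannot invoke the generic $\Omega(m^2)$ bit-complexity lower bound for equality. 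What is small is the \emph{number of distinct transcripts} -- bounded by the number of distinct middle-set configurations, i.e.\ $(2^B+1)^{2T}$ -- not their length; this is precisely why the fooling-set/rectangle argument must be run directly on configurations via Observation \ref{l:similar-executions}, which is what your third paragraph (and the paper) does.
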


\begin{proof}
Assume, towards contradiction, that there exists a protocol $\Pi$ that 
correctly solves  $\EQ(F^2_{m},h_0,h_{m^2},m^2)$ and has worst case time complexity 
$T_{\Pi}< \frac{m^2}{2B}$. Let $M_{T_{\Pi}}$ be the middle set of $F_{m}^2$, 
as previously defined, corresponding to $i=T_\Pi$. Let $L_{T_{\Pi}}$ be 
the set of vertices  that reside on the left side of $M_{T_{\Pi}}$ in $F_m^2$ 
and $R_{T_{\Pi}}$ be the set of vertices that reside on the right side of 
$M_{T_{\Pi}}$. Note that $M_{T_{\Pi}}\cup L_{T_{\Pi}}\cup R_{T_{\Pi}}=V$ and 
$M_{T_{\Pi}}\cap L_{T_{\Pi}}=M_{T_{\Pi}}\cap  R_{T_{\Pi}} =
L_{T_{\Pi}} \cap  R_{T_{\Pi}}=\phi$.
Consider a simulation of protocol $\Pi$ by two parties $\partyA$ and $\partyB$.
The simulation works so that party $\partyA$ simulates the execution of $\Pi$ 
in $M_{T_{\Pi}}$ and in $L_{T_{\Pi}}$, and party $\partyB$ simulates the 
execution of $\Pi$ in $M_{T_{\Pi}}$ and in $R_{T_{\Pi}}$. 
At the end of the simulation, party $\partyB$ outputs the output of vertex $r$ 
in the execution of $\Pi$. Every step of the distributed protocol $\Pi$ 
is simulated by multiple bit exchanges between the parties $\partyA$ and $\partyB$. 
Party $\partyA$ sends to party  $\partyB$ all messages sent in $\Pi$ by vertices in 
$L_{T_{\Pi}}$ to vertices in $M_{T_{\Pi}}$ and $\partyB$ party sends to party $\partyA$ 
all messages sent in $\Pi$ by vertices in $R_{T_{\Pi}}$ to vertices in 
$M_{T_{\Pi}}$. Hence, the parties $\partyA$ and $\partyB$ have full knowledge of the 
configuration of vertices in $M_{T_{\Pi}}$ and are able to compute the 
configurations of vertices in $L_{T_{\Pi}}$ and vertices in $R_{T_{\Pi}}$ 
respectively. Thus, parties $\partyA$ and $\partyB$  correctly simulate $\Pi$.

Combining the assumption that $T_\Pi < \frac{m^2}{2B}$ with Corollary 
\ref{cl:bound-rho}, we get that the number of possible configuration of 
$M_{T_{\Pi}}$ in all possible executions of protocol $\Pi$ is smaller than 
$2^{m^2}$. 
Hence there exist inputs ${x},{x}'\in \{0,1\}^{m^2}$ such that ${x}\ne {x}'$ 
and protocol $\Pi$ reaches the same configuration of $M_{T_{\Pi}}$ when 
executed with input 
$\INPUTS=\INPUTR=x$ or $\INPUTS=\INPUTR=x'$. 
Denote by $\Pi_x$ the execution of $\Pi$ on $F_m^2$ with input 
$\INPUTS=\INPUTR=x$, and by $\Pi_{x'}$ the execution of $\Pi$ on $F_m^2$ with 
input $\INPUTS=\INPUTR=x'$.
During both simulations of $\Pi_x$ and $\Pi_{x'}$, parties $\partyA$ and 
$\partyB$ exchange the same sequence of messages (induced by the configuration 
of $M_{T_{\Pi}}$). By Obs. \ref{l:similar-executions} we get that in the 
simulation of $\Pi$ on input 
$\INPUTS=x', \INPUTR=x$, $\partyA$ and $\partyB$ exchange the same 
sequence of messages as in the simulation of $\Pi_x$.  Thus in both executions 
of protocol $\Pi$ ($\Pi_x$ and $\Pi$ with input $\INPUTS=x', \INPUTR=x$), 
vertex $r$ has identical initial configuration and receives the same 
sequence of messages, hence it reaches the same decision,
in contradiction to the correctness of $\Pi$.
\end{proof}

\subsection {A lower bound for the MST problem on $\cJ^2_m$}
\label{sec:lb_ver_prob}

In this section we use the results achieved in the previous subsection,
and show that 
the distributed MST verification
problem cannot be solved faster than $\Omega(m^2/B)$ rounds on weighted
versions of the graphs $F^2_m$.
In order to prove this lower bound, 
we recall the definition given in $\cite{PR_00}$ 
of a family of weighted graphs $\cJ^2_m$, based on $F_m^2$ but differing 
in their weight assignments.
Then, 
we show that any algorithm solving 
the MST verification problem on the graphs of $\cJ^2_m$ can also be used to 
solve the vector equality problem on $F^2_m$ with the same time complexity.
Subsequently, the lower bound for the distributed MST verification problem 
follows from 
the lower bound given in the previous subsection for the vector 
equality problem on $F^2_m$.

\subsubsection{The graph family $\cJ^2_m$}
The graphs $F^2_m$ defined earlier were unweighted.
In this subsection, we define for every graph $F^2_m$ a family of
weighted graphs
$\cJ^2_m = \{ J^2_{m,\gamma} = (F_m^2, \omega_\gamma) \mid
1 \le \gamma \le 2^{m^2}\}$,
where $\omega_\gamma$ is an edge weight function.
In all the weight functions $\omega_\gamma$, all the edges of the
highway $\cH$ and the paths $\cP^j$ are assigned the weight $0$.
The spokes of all stars except $S_0$ and $S_m$ are assigned
the weight 4. The spokes of the star $S_m$ are assigned the weight 2.
The only differences between different weight functions $\omega_\gamma$
occur on the $m^2$ spokes of the star $S_0$.
Specifically, each of these $m^2$ spokes is assigned a weight of
either 1 or 3; there are thus $2^{m^2}$ possible 
weight assignments. 
\newline\indent
Since discarding all spoke edges of weight 4 from the
graph $J^2_{m,\gamma}$ leaves it connected, and since every spoke
edge of weight 4 is the heaviest edge on some cycle in the graph, the
following is clear.

\begin{lemma}
\label{infWE}
No spoke edge of weight 4 belongs to the MST of $J^2_{m,\gamma}$,
for every $1\le\gamma \le 2^{m^2}$.
\end{lemma}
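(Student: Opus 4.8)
The plan is to establish Lemma~\ref{infWE} by verifying the two structural claims that the paragraph preceding the statement asserts, and then invoking the standard cut/cycle characterization of MST membership. The statement to prove is that no spoke edge of weight $4$ can belong to the MST of $J^2_{m,\gamma}$ for any $\gamma$. The underlying principle I would use is the cycle property of minimum spanning trees: an edge $e$ that is the strictly heaviest edge on some cycle $C$ cannot belong to the (unique) MST, since one could always swap it out for a lighter edge of $C$ and obtain a spanning connected subgraph of smaller total weight. Since the weights here are distinct after the $\omega'$ tie-breaking (or, more simply, since we only need ``no edge of weight $4$ is in the MST'' and all competing edges have strictly smaller weight), this characterization applies directly.

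First I would exhibit, for an arbitrary weight-$4$ spoke edge $e = (v^j_{im}, h_{im})$ with $1 \le i \le m-1$, an explicit cycle $C$ in $F^2_m$ containing $e$ on which $e$ is the heaviest edge. The natural cycle runs from $h_{im}$ down the spoke to $v^j_{im}$, along the zero-weight path $\cP^j$ to a neighboring star vertex $v^j_{i'm}$ (for instance $v^j_{(i-1)m}$ or $v^j_{(i+1)m}$), up that spoke to $h_{i'm}$, and back along the zero-weight highway $\cH$ to $h_{im}$. Every edge on this cycle other than the two spokes has weight $0$, and the companion spoke has weight at most $4$; since the highway and path edges all have weight $0$ and the star $S_0$ and $S_m$ spokes have weights in $\{1,2,3\}$ while the interior stars all have weight $4$, I would choose the return spoke so that it is strictly lighter than $e$ (using an adjacent star whose spokes weigh $0$-path, $\le 4$ spoke, ties broken by the lexicographic rule). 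This shows $e$ is cycle-heavy and hence excluded from the MST.

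The cleaner route, which is what the paragraph's phrasing suggests, is to lean on the two facts stated just before the lemma: (1) deleting all weight-$4$ spokes leaves $J^2_{m,\gamma}$ connected, and (2) every weight-$4$ spoke is the heaviest edge on some cycle. From (2) and the cycle property, each weight-$4$ spoke is cycle-heavy, so by the reasoning used in the correctness analysis (an MST contains no cycle-heavy edge) none of them lies in the MST; fact (1) simultaneously guarantees that the MST need not use any of them to remain spanning. So I would state the cycle property as a one-line justification, observe that connectivity after deletion (fact (1)) plus the cycle-heaviness of each weight-$4$ spoke (fact (2)) together force every weight-$4$ spoke out of the unique MST, and conclude.

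The main obstacle is essentially bookkeeping rather than conceptual: I must verify claim (2) uniformly, i.e.\ confirm that \emph{every} weight-$4$ spoke genuinely sits on a cycle where it is strictly heaviest. This requires checking that for each interior star $S_i$ ($1 \le i \le m-1$) and each path index $j$, there is an adjacent reachable star vertex along $\cP^j$ whose spoke is lighter, which holds because the adjacent stars either have weight-$4$ spokes (handled by tie-breaking via $\omega'$, or by reaching all the way to $S_m$ whose spokes weigh $2$) or strictly lighter spokes. Since the paths $\cP^j$ connect all the star vertices with weight-$0$ edges and the highway connects the $h_{im}$ with weight-$0$ edges, every weight-$4$ spoke indeed closes such a cycle, so the only care needed is to ensure the chosen companion edge is not itself a weight-$4$ spoke of equal weight — which the distinct-weight assumption from the algorithm's $\omega'$ convention resolves. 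Once that uniformity is confirmed, the lemma follows immediately from the cycle property.
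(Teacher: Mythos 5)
Your overall strategy coincides with the paper's: the paper proves Lemma~\ref{infWE} by exactly the two observations you cite (deleting all weight-4 spokes leaves $J^2_{m,\gamma}$ connected, and every weight-4 spoke is the heaviest edge on some cycle), followed by the cycle property. However, one ingredient you invoke is genuinely wrong, and in your final sentence you appear to lean on it: the appeal to the $\omega'$ lexicographic tie-breaking. That convention belongs to the upper-bound (algorithm) section of the paper; it is defined in terms of the indicator variables of the candidate tree and plays no role in the lower-bound construction. Lemma~\ref{infWE} is a statement about the MST of $J^2_{m,\gamma}$ under $\omega_\gamma$ itself --- this is what Lemma~\ref{finWE} and the reduction in Lemma~\ref{reduc2} consume --- so you are not free to perturb the weight function to break ties. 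Worse, $\omega'$ depends on the candidate $T$, which in the reduction encodes the unknown input vectors, so ``the MST of the tie-broken graph'' would not even be an instance-independent object.

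Fortunately no tie-breaking is needed, and the second alternative you mention is the correct (and the paper's implicit) one: for an interior spoke $e=(h_{im},v^j_{im})$ with $1\le i\le m-1$, close the cycle by running along the zero-weight path $\cP^j$ all the way to $v^j_0$ (or $v^j_{m^2}$), crossing the $S_0$ spoke (weight 1 or 3) or the $S_m$ spoke (weight 2), and returning along the zero-weight highway to $h_{im}$. Every other edge of this cycle has weight at most 3, so $e$ is the \emph{strictly} heaviest edge on it, and the cycle property excludes it from every MST --- no uniqueness assumption is required. Your ``adjacent star'' cycle fails precisely when $2\le i\le m-2$, because both neighboring stars are interior and their spokes also weigh 4; the repair is to route to $S_0$ or $S_m$, not to tie-break.
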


Let us pair the spoke edges of $S_0$ and $S_m$, denoting the $j$th
pair (for $1\le j\le m^2$) by
$PE^j = \{(s,v_0^j),(r,v_{m^2}^j)\}$.
The following is also clear.
\begin{lemma}
\label{finWE}
For every $1\le\gamma \le 2^{m^2}$ and $1 \le j \le m^2$, exactly
one of the two edges of $PE^j$
belongs to the MST of $J^2_{m,\gamma}$, namely, the lighter one.
Moreover, for every $m\ge 2$ and $1\le\gamma \le 2^{m^2}$,
all the edges of the highway $\cH$ and the paths $\cP^j$,
for $1\le j\le m^2$, belong to the MST of $J^2_{m,\gamma}$.
\end{lemma}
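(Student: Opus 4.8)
The plan is to prove both assertions using the strong forms of the cut and cycle properties for minimum spanning trees, which remain valid even though the weights here are not distinct: an edge that is the \emph{strict} minimum across some cut lies in every MST, and an edge that is the \emph{strict} maximum on some cycle lies in no MST. Since Lemma \ref{infWE} already removes all weight-$4$ spokes, the only edges still in question are the weight-$0$ edges (those of the highway $\cH$ and of the paths $\cP^j$) and the spokes of $S_0$ and $S_m$, of weights $1,3$ and $2$ respectively.

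For the ``moreover'' part, I would show that each weight-$0$ edge is the unique lightest edge across a suitable cut. For a highway edge $(h_{im},h_{(i+1)m})$, take the cut separating the highway prefix $\{h_0,h_m,\dots,h_{im}\}$ from the rest of $V$; every crossing edge other than this highway edge is a spoke and hence has weight at least $1$, so the highway edge is the strict minimum and belongs to every MST. For a path edge $(v_l^j,v_{l+1}^j)$, take the cut $\{v_0^j,\dots,v_l^j\}$ against the rest of $V$; again the only crossing edges besides this path edge are spokes of weight at least $1$, so the path edge is the strict minimum and lies in every MST. This establishes that all highway and path edges belong to the MST.

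For the first assertion, fix $j$ and consider the cut isolating the path vertex set $V(\cP^j)$ from the rest of the graph. The crossing edges are exactly the spokes incident to $\cP^j$, namely $(h_{im},v_{im}^j)$ for $0\le i\le m$, whose weights are the $S_0$-spoke weight ($1$ or $3$), the $S_m$-spoke weight ($2$), and the weight-$4$ intermediate spokes. In either case (the $S_0$ spoke weighing $1$, or weighing $3$) the lighter edge of $PE^j$ is the strict minimum across this cut, so it lies in every MST. To see that the heavier edge of $PE^j$ lies in no MST, consider the cycle consisting of the path $\cP^j$, the two spokes of $PE^j$, and the highway segment from $s=h_0$ to $r=h_{m^2}$. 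All edges on this cycle have weight $0$ except the two spokes of $PE^j$, so the heavier spoke of $PE^j$ is the strict maximum on the cycle and is therefore excluded from every MST. Hence exactly one edge of $PE^j$, the lighter one, belongs to the MST.

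The only subtlety, and the step I would treat most carefully, is that the weights are non-distinct, so strict extremality must be verified in every application: that each chosen cut has a \emph{unique} lightest crossing edge and that the exhibited cycle has a \emph{unique} heaviest edge. These follow from the strict ordering $0<1<2<3<4$ of the five weight values, together with the observation that in the weight-$0$ subgraph the highway and each path $\cP^j$ form separate connected components, linked to one another only through spokes. As a byproduct, pinning down every weight-$0$ edge and the lighter spoke of each $PE^j$ as forced, and every weight-$4$ spoke and heavier $PE^j$ spoke as forbidden, shows that the MST is unique, consistent with the phrase ``the MST.''
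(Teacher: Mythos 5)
Your proof is correct. The paper in fact gives no argument for Lemma~\ref{finWE} at all (it is dismissed as ``also clear,'' in the same cut/cycle spirit as the remark preceding Lemma~\ref{infWE}), and your argument---the strict cut property for the weight-$0$ edges and the lighter spoke of each $PE^j$, the strict cycle property for the heavier spoke, with the explicit check that strictness survives the repeated weights $0,1,2,3,4$---is exactly the standard reasoning the paper leaves implicit, so it fills in the omitted details rather than taking a different route.
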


Note that the horizontal edges belong to the MST under any edge weight 
function. Of the remaining edges, exactly one of each pair will join the MST,
depending on the particular weight assignment to the spoke edges of
the star $S_0$.

\subsubsection{The lower bound}

\begin{lemma}
\label{reduc2}
Any distributed algorithm for  MST verification on the graphs of the
class $\cJ_m^2$, can be used to solve the
$\EQ(F_m^2,h_0,h_{m^2},m^2)$ problem on $F_m^2$ with the same
time complexity.
\end{lemma}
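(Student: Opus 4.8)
The plan is to reduce an arbitrary instance of $\EQ(F_m^2,h_0,h_{m^2},m^2)$ to an MST verification instance on some weighted graph $J^2_{m,\gamma}\in\cJ_m^2$, so that an MST verification algorithm run on $J^2_{m,\gamma}$ reveals whether $\bar X^s=\bar X^r$. The key observation driving the reduction is Lemma \ref{finWE}: in each graph $J^2_{m,\gamma}$, for every pair $PE^j=\{(s,v_0^j),(r,v_{m^2}^j)\}$ exactly the lighter of the two spoke edges joins the MST, while all horizontal (highway and path) edges are always in the MST and all weight-$4$ spokes are never in it (Lemma \ref{infWE}). The spoke of $S_0$ has weight $1$ or $3$ and the paired spoke of $S_m$ always has weight $2$; thus which edge of $PE^j$ is in the MST is dictated entirely by the $j$th weight bit of $\gamma$ on the $S_0$-spoke.

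First I would set up the encoding. Given an $\EQ$ instance with input vectors $\bar X^s,\bar X^r\in\{0,1\}^{m^2}$, I would have the two distinguished vertices translate their bits into edge weights on the spokes they control: vertex $s=h_0$ uses its bit $X_j^s$ to set the weight of the $S_0$-spoke $(s,v_0^j)$ to $1$ if $X_j^s=0$ and to $3$ if $X_j^s=1$ (or some fixed such convention), thereby selecting a particular $\gamma$. Crucially, vertex $r=h_{m^2}$ controls the $S_m$-spokes, and I would use its bits $X_j^r$ to set the candidate-MST indicator rather than (or in addition to) the weights: the MST \emph{candidate} $T$ presented to the verification algorithm is built so that, within pair $PE^j$, the edge marked as ``in $T$'' is the $S_0$-spoke when $X_j^r$ indicates one value and the $S_m$-spoke when it indicates the other. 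All horizontal edges are marked in $T$, and all weight-$4$ spokes are unmarked, exactly matching the structure forced by Lemmas \ref{infWE} and \ref{finWE}. The point is that $T$ is a spanning tree of $F_m^2$ for every choice, and it is the \emph{true} MST if and only if, for every $j$, the marked edge of $PE^j$ is the lighter one, i.e.\ if and only if $r$'s choice agrees with the weight selection induced by $s$'s bits.

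The heart of the argument is then to show that ``$T$ is the MST of $J^2_{m,\gamma}$'' is logically equivalent to ``$\bar X^s=\bar X^r$.'' I would argue that the conventions can be fixed so that for each coordinate $j$, the marked edge of $PE^j$ is the genuinely lighter edge precisely when $X_j^s=X_j^r$; a single mismatch in coordinate $j$ forces $T$ to include the heavier edge of $PE^j$, making $T$ a non-minimum spanning tree (its weight exceeds that of the true MST by the weight gap of that pair). Hence the MST verification algorithm outputs ``yes'' at $r$ exactly when all coordinates agree, which is exactly the $\EQ$ answer. Since the verification runs on $J^2_{m,\gamma}=(F_m^2,\omega_\gamma)$, which has the same underlying graph $F_m^2$, and since the only setup is local weight/indicator assignment at $s$ and $r$ requiring no extra communication, the reduction preserves time complexity up to an additive constant; I would note $r$ reads off its own output variable $A^r$ as the $\EQ$ answer.

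The main obstacle I anticipate is making the encoding conventions line up so that equality of bits corresponds cleanly to lightness of the marked edge in \emph{every} pair simultaneously, while ensuring that in all cases the marked set $T$ is a genuine spanning tree (so that the verification problem is well-posed and its ``yes/no'' answer is meaningful rather than trivially ``no'' for structural reasons). Because $s$ controls weights (via $\{1,3\}$ on $S_0$) and $r$ controls the indicator/candidate selection within each pair, I must check that every combination of their bits yields a spanning tree and that the minimality test isolates exactly the disagreement coordinates; verifying the weight arithmetic of the pairs ($1$ vs.\ $2$ and $3$ vs.\ $2$) against the chosen bit-to-weight map is the delicate bookkeeping step. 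Once that correspondence is pinned down, the reduction and the inherited $\Omega(m^2/B)$ bound follow immediately.
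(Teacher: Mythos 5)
There is a genuine gap in your reduction: it is not distributively implementable. You let $s$ encode $\INPUTS$ in the weights of the $S_0$-spokes and let $r$ choose, \emph{within each pair} $PE^j$, which of the two edges is marked as belonging to the candidate $T$ (so that $T$ is a spanning tree ``for every choice''). But in the distributed MST verification problem the candidate is specified by indicator variables $Y_i^v$ stored at the two endpoints of each edge, and the $S_0$-spoke $(s,v_0^j)$ has endpoints $s$ and $v_0^j$, both far from $r$. For the marking of that edge to depend on $X_j^r$, the bit $X_j^r$ would have to be known at $s$'s side before the verification algorithm starts; transporting all $m^2$ such bits across $F_m^2$ costs $\Omega(m^2/B)$ rounds by itself (the highway is a single path of $B$-bounded links), which destroys the claim that the reduction preserves time complexity up to an additive constant. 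Moreover, this defect cannot be patched while keeping your invariant that $T$ is always a spanning tree: if the markings of the $S_0$-spokes are fixed independently of $\INPUTR$ (the only local option), then whenever $r$'s marking of the $S_m$-spoke disagrees with that fixed choice, the pair $PE^j$ has either both edges or neither edge in $T$, i.e.\ $T$ contains a cycle or is disconnected. A purely weight-based encoding on both sides with a fixed candidate also fails, since ``the marked edge is the lighter one'' can only express monotone threshold conditions, not equality of bits.

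The paper's proof resolves exactly this point differently, and your stated ``main obstacle'' (checking that every combination of bits yields a spanning tree) is resolved in the opposite direction: it is simply false, and that is fine. In the paper, $s$ sets both the weights ($1$ or $3$) \emph{and} the indicators of the $S_0$-spokes from $\INPUTS$ (marking the spoke exactly when it gets weight $1$), while $r$ sets the indicators of the $S_m$-spokes (of fixed weight $2$) from $\INPUTR$; all assignments are local, needing only $O(1)$ rounds. Then agreement in coordinate $j$ yields exactly one marked edge of $PE^j$, namely the lighter one, whereas disagreement yields both edges marked (a cycle) or neither (disconnection). Hence $T$ is an MST if and only if $\INPUTS=\INPUTR$, where the ``no'' cases are rejected for \emph{structural} reasons rather than, as you claim, because ``$T$ is a spanning tree whose weight exceeds that of the true MST.'' Your correctness argument for the mismatch case is therefore wrong as written, and fixing it forces you into the paper's split encoding together with an argument that non-spanning candidates are also rejected.
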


\begin{proof}
Consider an algorithm $\cA_{mst}$ for the MST verification problem, and 
suppose that we are given an instance of the $\EQ(F_m^2,h_0,h_{m^2},m^2)$
problem with input strings $\INPUT$ (stored in variables 
$\indVarS$ and $\indVarR$ respectively). We use the algorithm $\cA_{mst}$
to solve this instance of vector equality problem as follows.
Vertices $s=h_0$ and $r=h_m$ initiate the construction of an
instance of the MST verification by turning $F_m^2$ into a weighted graph from
$\cJ_m^2$, setting the edge weights and marking the edges participating in 
the MST candidate as follows:
for each $X_i^s \in \indVarS$, $1 \le i \le m^2$, vertex $s$ sets the weight
variable $W_i^{s}$ corresponding to the spoke edge $e_i \in
E(S_0)$,
to be
\begin{displaymath}
W_i^{s} =
\begin{cases}
3, &\text { if }  X_i^s = 1; \\ 
1, &\text { if }  X_i^s = 0.
\end{cases}
\end{displaymath}
In addition vertices $s$ and $r$ mark the edges participating in the MST 
candidate (that we wish to verify) in the following manner:  
for each $X_i^s \in \indVarS$, $1 \le i \le m^2$, vertex $s$ sets 
the indicator variable $Y_i^{s}$ corresponding to the spoke edge 
$e_i \in E(S_0)$,
to be
\begin{displaymath}
Y^s_i =
\begin{cases}
0, &\text { if }  X_i^s = 1; \\
1, &\text { if }  X_i^s = 0.
\end{cases}
\end{displaymath}
for each $X_i^r \in \indVarR$, $1 \le i \le m^2$, vertex $r$ sets 
the indicator variable $Y_i^{r}$ corresponding to the spoke edge 
$e_i \in E(S_m)$,
to be
\begin{displaymath}
Y^r_i =
\begin{cases}
1, &\text { if }  X_i^r = 1; \\
0, &\text { if }  X_i^r = 0.
\end{cases}
\end{displaymath}
The rest of the graph edges are assigned fixed weights as specified above.
All path edges and highway edges are marked 
as participating in the MST candidate. All spoke edges not belonging to stars 
$S_0, S_m$ are marked as not participating in the MST candidate.  
Note that the weights of all the edges except those connecting $s$ to its 
immediate neighbors in $S_0$ do not depend on the particular input instance 
at hand. Similarly, for all edges except the spoke edges belonging to 
$S_0,S_m$ the indicator variable for participating in the MST candidate 
does not depend on the instance at hand.  Hence a constant number of rounds
of communication between $s,r$ and their  $S_0,S_m$ neighbors suffices 
for performing this assignment; $s$ assigns its edges weights and indicator 
variables according to its input string $\INPUTS$, and needs to send at most 
a constant number of messages to each of its neighbors in $S_0$, to notify it 
about the weight and the indicator variable of the spoke edge connecting them.
(Same argument holds for vertex $r$).
Every vertex $v$ in the network,
upon receiving the first message of algorithm $\cA_{mst}$, assigns the
values defined by the edge weight function $\omega_\gamma$ to its
variables $W_i^v$ and the corresponding indicator variable $Y_i^v$ 
as described above.
(As discussed earlier, this does not require $v$ to know $\gamma$, as
its assignment is identical under all weight functions $\omega_\gamma$,
$1 \le \gamma \le 2^{m^2}$).
From this point on, $v$ may proceed with executing algorithm
$\cA_{mst}$ for the MST verification problem.

Once algorithm $\cA_{mst}$ terminates,  vertex $r$ 
determines its output for the vector equality problem by stating that both input vectors are equal if and only if the MST verification algorithm verified that the given MST candidate is indeed a minimum spanning tree.
By Lemma \ref{finWE}, the lighter edge of each pair $PE^j$, for
$1\le j\le m^2$, belongs to the MST; 
thus, by the construction of the MST candidate and the weight assignment 
to the edges of $F^2_m$ the MST candidate forms a minimum spanning tree 
if and only if  the input vectors $\INPUT$ for the vector equality problem are equal.
Hence the resulting algorithm  correctly solves the given instance
of the vector equality  problem.
\end{proof}

Combined with Lemma \ref{l:EQ-LB}, we now have:
\begin{theorem}
For every $m \ge 1$, any distributed algorithm for solving  MST verification 
problem on the graphs of the family $\cJ_m^2$ 
requires $\Omega(m^2/B)$ time.
\end{theorem}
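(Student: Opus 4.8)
The plan is to derive the theorem as an immediate corollary of the two preceding lemmas, composing the reduction of Lemma~\ref{reduc2} with the lower bound of Lemma~\ref{l:EQ-LB}. First I would assume, towards contradiction, that some distributed algorithm $\cA_{mst}$ verifies an MST on every graph of $\cJ_m^2$ in time $o(m^2/B)$. By Lemma~\ref{reduc2}, $\cA_{mst}$ can then be converted into a protocol solving $\EQ(F_m^2,h_0,h_{m^2},m^2)$ on $F_m^2$ with the same time complexity, and hence also in $o(m^2/B)$ time.

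This directly contradicts Lemma~\ref{l:EQ-LB}, which shows that any protocol solving $\EQ(F_m^2,h_0,h_{m^2},m^2)$ needs $\Omega(m^2/B)$ time (in fact, it rules out any protocol whose worst-case running time drops below $\frac{m^2}{2B}$). The contradiction establishes that no MST verification algorithm for the family $\cJ_m^2$ can run faster than $\Omega(m^2/B)$, which is exactly the claimed bound. Since $B$ will later be taken as $O(\log n)$ and $n=\Theta(m^4)$, this is precisely the $\tilde{\Omega}(\sqrt{n})$ lower bound targeted in this section.

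The only delicate point---the nearest thing to an obstacle---is confirming that the reduction of Lemma~\ref{reduc2} carries over the running time without asymptotic loss. This rests on two facts already established in that lemma: each vertex can fix its weight variables $W_i^v$ and indicator variables $Y_i^v$ according to $\omega_\gamma$ without knowing $\gamma$, since these values agree across all $1 \le \gamma \le 2^{m^2}$; and $s$ and $r$ need only a constant number of rounds to propagate their instance-dependent assignments on the spokes of $S_0$ and $S_m$ to their immediate neighbors. With the setup confined to $O(1)$ rounds, the time complexities of the two problems match asymptotically, and the theorem follows.
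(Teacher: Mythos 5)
Your proof is correct and follows exactly the paper's own route: the paper derives this theorem immediately by combining Lemma~\ref{reduc2} (the reduction from vector equality to MST verification) with Lemma~\ref{l:EQ-LB} (the $\Omega(m^2/B)$ bound for $\EQ$), precisely as you do. Your added check that the reduction's setup costs only $O(1)$ rounds is a fair point of care, though the paper treats it as already settled inside Lemma~\ref{reduc2}.
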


\begin{corollary}
\label{cl:sqrtn2}
Any distributed algorithm for the MST verification problem 
requires $\Omega(\sqrt{n}/B)$ time on
some $n$-vertex graphs of diameter $O(n^{1/4})$.
\end{corollary}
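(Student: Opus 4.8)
The plan is to obtain this corollary by a direct substitution of parameters into the preceding theorem, so the argument is essentially a matter of instantiating the asymptotic relationships established during the construction. Recall from Section~\ref{ss:graphs} that each graph $F^2_m$, and hence every weighted graph $J^2_{m,\gamma}$ in the family $\cJ^2_m$ (which shares the same underlying topology and differs only in its weight assignment), has $n = \Theta(m^4)$ vertices and diameter $\Theta(m)$.

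First I would invert the relation $n = \Theta(m^4)$ to express $m$ in terms of $n$, obtaining $m = \Theta(n^{1/4})$. Consequently $m^2 = \Theta(\sqrt{n})$, and the diameter satisfies $\Theta(m) = \Theta(n^{1/4}) = O(n^{1/4})$. Next I would take the $\Omega(m^2/B)$ time lower bound furnished by the theorem for MST verification on the graphs of $\cJ^2_m$ and substitute $m^2 = \Theta(\sqrt{n})$, which immediately yields an $\Omega(\sqrt{n}/B)$ lower bound. Since the family $\cJ^2_m$ consists precisely of $n$-vertex graphs of diameter $O(n^{1/4})$, this establishes the claimed statement.

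I note that the corollary is existential in nature (it asserts hardness ``on some $n$-vertex graphs''), so it suffices to exhibit the single family $\cJ^2_m$ as the witness; no padding argument is required to fill in intermediate values of $n$ not of the form $\Theta(m^4)$. There is no substantive obstacle here — the only point meriting care is to confirm that both the vertex count $n=\Theta(m^4)$ and the diameter $\Theta(m)$ of the family match the claimed relationships $m^2=\Theta(\sqrt n)$ and $\mathrm{diam}=O(n^{1/4})$, and both follow directly from the structure of $F^2_m$ recalled in Section~\ref{ss:graphs}.
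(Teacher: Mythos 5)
Your proposal is correct and matches the paper's (implicit) reasoning exactly: the paper states this corollary without further proof, precisely because it follows by substituting $m=\Theta(n^{1/4})$, hence $m^2=\Theta(\sqrt{n})$ and diameter $\Theta(m)=O(n^{1/4})$, into the preceding $\Omega(m^2/B)$ lower bound for the family $\cJ^2_m$. Your additional remark that the existential phrasing requires only this one family as witness is also consistent with the paper's intent.
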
 

Theorem \ref{thm:lower-bound-time} follows.

\section{Message Complexity Lower Bound}
\label{sec:msg-lb}

We prove a message complexity lower bound of $\Omega(m)$ on the 
{\em Spanning Tree (ST) verification} problem, which is a relaxed version 
of the MST verification problem defined as follows. 
Given a connected graph $G=(V,E,\omega)$ and a subgraph $T$ 
(referred to as the {\em ST candidate}), we wish to decide whether $T$ 
is a spanning tree\footnote{Equivalently, we may consider also disconnected 
graphs, and require $T$ to be a spanning forest of $G$.}
of $G$ (not necessarily of minimal weight). 
Clearly, a lower bound on the {\em ST verification} problem also applies 
to the {\em MST verification} problem. Since spanning tree verification 
is independent of the edge weights, for convenience 
we consider unweighted networks throughout this lower bound proof.

The case of single source node is easy. Here, given a $G$ with spanning tree 
$T$, if the execution does not send a message on some edge 
$e\in G\setminus\{T\}$, then we simply break $e$ to two edges by adding 
another vertex in the middle of $e$. This brings us to the case where $T$ 
is no longer spanning, but the execution (and hence the output of nodes) 
remains the same.

In what follows, we consider the somewhat more difficult case where 
all nodes start the execution at the same time. 
We begin with a few definitions. Recall that a \textit{protocol} is a local 
program executed by all vertices in the network. 
In every step, each processor performs local computations, 
sends and receives messages, and changes its state according to the 
instructions of the protocol. A protocol achieving a given task should work 
on every network $G$  and every ST candidate $T$.

Following \cite{AGVP_90}, we
denote the {\em execution} of protocol $\Pi$ on network $G(V,E)$ with 
ST candidate $T$ by  $EX(\Pi, G, T)$.
The {\em message history} of an execution $EX=EX(\Pi, G, T)$
is a sequence
describing the messages sent during the execution $EX$.
Consider a protocol $\Pi$, two graphs $G_0(V,E_0)$ and $G_1(V,E_1)$ over the 
same set of vertices $V$, and two ST candidates $T_0$ and $T_1$ for $G_0$ 
and $G_1$ respectively, 
and the corresponding executions $EX_0=EX(\Pi, G_0, T_0)$ and 
$EX_1=EX(\Pi, G_0, T_1)$. We say that the executions are {\em similar} 
if their message history is identical.

\begin{figure*}[htb]
\begin{center}
\begin{minipage}{\textwidth}
\centerline{\includegraphics[scale=0.6]{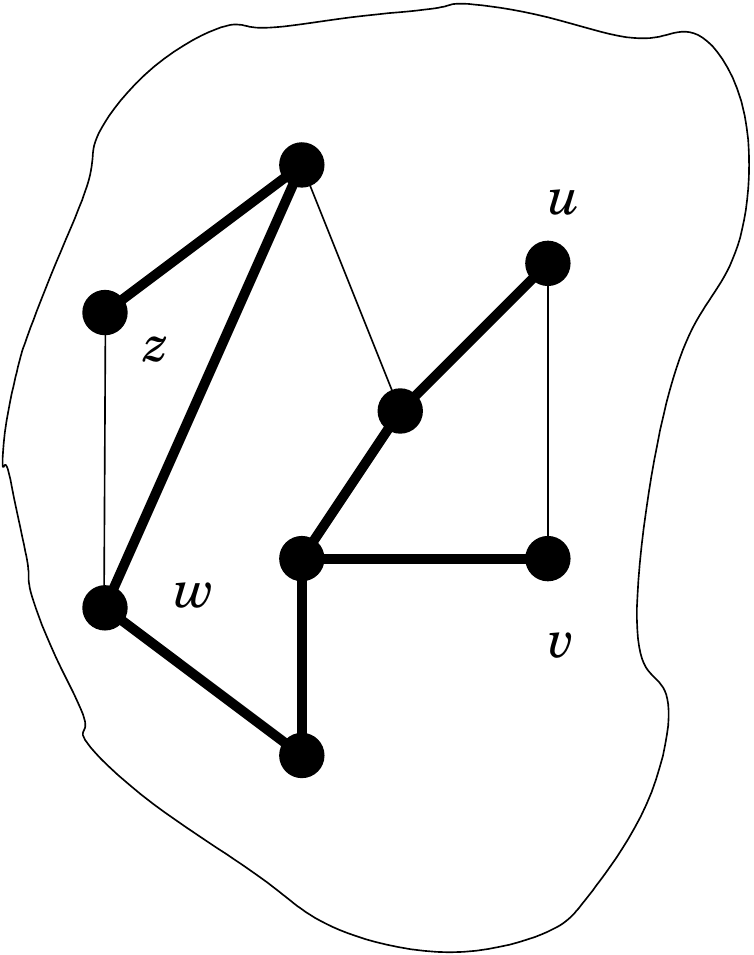}}
\end{minipage}
\caption[]{
\label{f:Graph_G_MST}
\sf Graph $G$ with ST candidate $T$ (the bold edges belong to $T$)}
\end{center}
\end{figure*}

Let $G=(V, E)$ be a graph (together with an assignment $\ID$
of vertex identifiers), 
$T$ be a subgraph and let $e_1=(u,v)$ and $e_2=(z,w)$ be two of its edges.
(See example in Figure \ref{f:Graph_G_MST}.) 
Let  $G'=(V',E')$ be some copy of $G=(V,E)$, where the identifiers of 
the vertices in $V'$ are not only pairwise distinct but also distinct 
from the given identifiers on $V$. Consider the following graphs 
$G^2$ and $G^X$ and the subgraph~$T^2$. 
\begin{itemize}
\item
 Graph $G^2$ is simply $G^2=(V^2, E^2)=G\cup G'=(V\cup V', E\cup E')$.
The subgraph $T^2$ of $G^2$ is defined as the union of the two copies of $T$, 
one in $G$ and the other in $G'$. See example in Figure~\ref{f:Graph_G2_MST}.
Note that $G^2$ is not connected, hence it is not a legal input to our problem.
\item
The graph $G^X$ is a ``cross-wired'' version of $G^2$. 
Formally,
$G^X=(V^2, E^X)$, where 
$E^X=E^{2}\sminus \{(u,v),(z',w')\} \cup \{(u,w'), (v,z')\}$. 
(Observe that for $e_1,e_2\notin T$, $T^2$ is also a subgraph of $G^X$.) 
See example in Figure~\ref{f:Graph_G2_e_MST}.
\end{itemize}

\begin{figure*}[htb]
\begin{center}
\begin{minipage}{\textwidth}
\centerline{\includegraphics[scale=0.6]{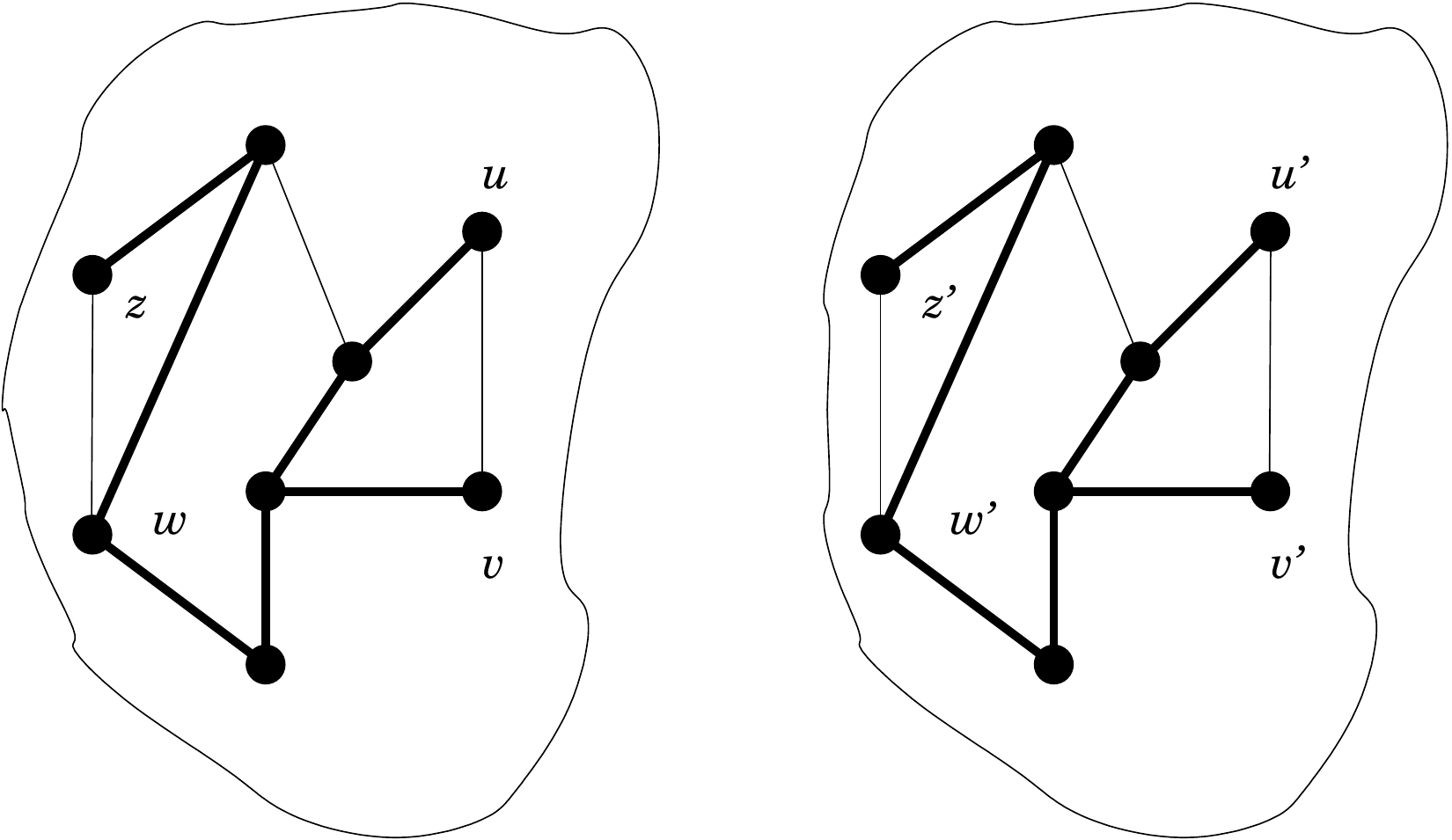}}
\end{minipage}
\caption[]{
\label{f:Graph_G2_MST}
\sf Graph $G^2$ with ST candidate $T^2$ (the bold edges belong to $T^2$)}
\end{center}
\end{figure*}

\begin{figure*}[htb]
\begin{center}
\begin{minipage}{\textwidth}
\centerline{\includegraphics[scale=0.6]{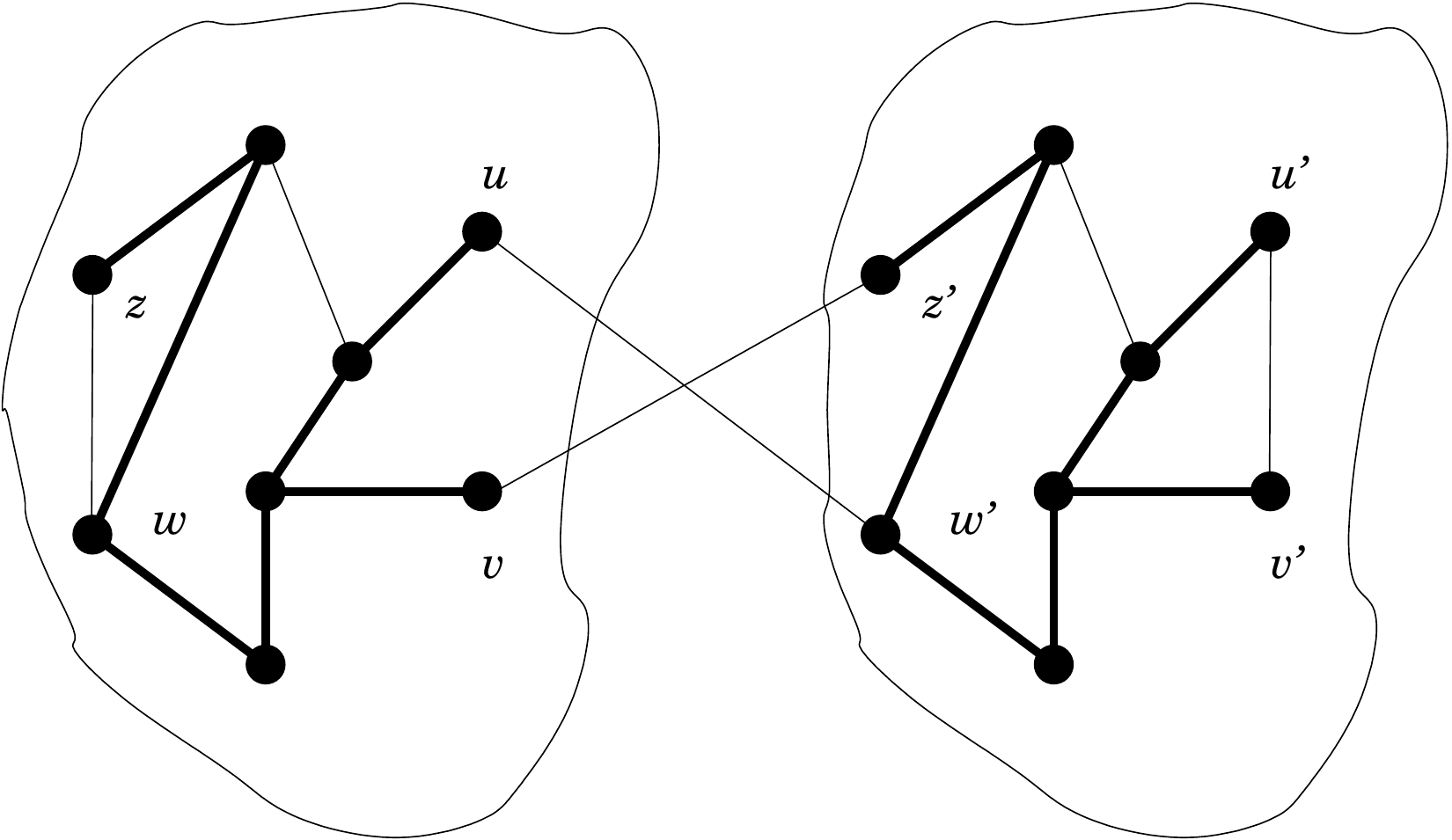}}
\end{minipage}
\caption[]{
\label{f:Graph_G2_e_MST}
\sf Graph $G^X$ with ST candidate $T^2$ (the bold edges belong to $T^2$)}
\end{center}
\end{figure*}

Let $\Pi$ be a protocol that correctly solves the ST verification problem.
Fix $G$ to be some arbitrary graph, fix a copy $G'$ of $G$, 
fix a spanning tree $T$ of $G$, 
and consider  the execution of $\Pi$ on either of the graphs $G,G', G^2$ and $G^X$ 
with the ST candidates $T,T,T^2$ and $T^2$, respectively. 
We stress  that $G^2$ (with candidate $T^2$) is not a valid input for the ST 
(or the MST) verification problem since it is not connected. 
Still, we can consider the execution $EX(\Pi, G^2,T^2)$, 
without requiring anything from its output.

\begin{lemma} 
\label{lem_similarity} 
Let $e_1\in E\setminus E(T)$ and $e'_2\in E'\setminus E'(T)$, such that no 
message is sent over the edges $e_1$ and $e'_2$ in execution $EX(\Pi,G^2,T^2)$. 
Then executions $EX(\Pi, G^2,T^2)$ and $EX(\Pi, G^X,T^2)$ 
are similar.
\end{lemma}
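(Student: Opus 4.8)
The plan is to prove the similarity of the two executions by induction on the round number $t$, showing that at every round the message history restricted to all edges (and hence the entire global state, in the sense of states reachable by identical message sequences) coincides between $EX(\Pi,G^2,T^2)$ and $EX(\Pi,G^X,T^2)$. The crucial structural observation is that $G^2$ and $G^X$ differ only in the four edges involved in the cross-wiring: $G^X$ deletes $e_1=(u,v)$ and $e_2'=(z',w')$ and adds $(u,w')$ and $(v,z')$. Thus the two graphs share exactly the same vertex set, the same identifiers, the same ST candidate $T^2$ (which avoids all four of these edges since $e_1,e_2'\notin T^2$), and the same set of all other edges. Since the protocol $\Pi$ is a local program and each vertex's behavior in a given round is determined solely by its state and the messages it has received, it suffices to show no vertex can ever detect the rewiring.

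First I would set up the induction hypothesis: for every round $t$, each vertex $x\in V^2$ is in the same state in both executions, and consequently sends the same messages on every edge that is common to $G^2$ and $G^X$. The base case $t=0$ is immediate, since every vertex starts in the same initial state (determined by its identifier and its incident candidate-edge markings, all of which are identical across the two graphs). For the inductive step, I would argue that the only way a vertex could diverge is if it received a different message, or a message from a different edge, in one execution versus the other. The common edges carry identical messages by the induction hypothesis, so the only possible source of divergence is the four rewired edges. Here is where the hypothesis of the lemma does the work: by assumption, no message is sent over $e_1$ or $e_2'$ during $EX(\Pi,G^2,T^2)$. By the induction hypothesis, the endpoints $u,v,z',w'$ are in identical states in both executions up through round $t$, so in $EX(\Pi,G^X,T^2)$ these vertices would behave identically; in particular, whatever decision they make about sending on their incident rewired edges is the same function of their (identical) state.

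The heart of the argument is then the following: vertex $u$ in $G^2$ chooses not to send a message over its edge to $v$ (this is the edge $e_1$, over which no message is sent); since $u$ is in the same state in $G^X$, and since the edge $e_1$ in $G^2$ and the edge $(u,w')$ in $G^X$ occupy the \emph{same port} at $u$ (the cross-wiring only changes the identity of the vertex at the far end, not the local port structure at $u$), vertex $u$ likewise sends no message over that port in $G^X$, hence none reaches $w'$. Symmetrically, $v$ sends nothing toward the port formerly leading to $u$, so $z'$ receives nothing over $(v,z')$; and the endpoints $z',w'$ send nothing over their rewired ports because $e_2'$ carries no message. Consequently, in $EX(\Pi,G^X,T^2)$ no message ever traverses any of the four rewired edges, exactly as in $EX(\Pi,G^2,T^2)$ no message traverses $e_1$ or $e_2'$ (and the edges $(u,w'),(v,z')$ do not even exist there). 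Every vertex therefore receives an identical multiset of messages on identical ports in both executions during round $t+1$, closing the induction.

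The main obstacle, and the point requiring the most care, is to make rigorous the claim that ``$u$ sends no message over the rewired port in $G^X$ because it sent none over $e_1$ in $G^2$.'' This relies on the fact that a vertex's decision of \emph{which ports to transmit on} is part of its local state-transition and is independent of who sits at the other endpoint; the protocol cannot condition this round's sending behavior on a message it has not yet received across that very edge. One must also verify that the silence is mutual and self-reinforcing: the argument is not merely that $u$ stays silent, but that the silence on all four rewired edges is consistent across the round, so that no endpoint is ever surprised into a new state by an unexpected arrival. Once this simultaneous silence is established for round $t+1$, the states remain identical and the induction proceeds; since the message histories on all common edges agree and all rewired/deleted edges are silent throughout, the two message histories are identical, which is precisely the definition of similar executions.
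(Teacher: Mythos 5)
Your proof is correct and takes essentially the same route as the paper's: a round-by-round induction showing each vertex behaves identically in both executions, hinging on the two facts that the rewiring is locally invisible at $u,v,z',w'$ (vertices do not know their neighbors' identifiers, so the port structure is unchanged) and that the assumed silence on $e_1,e_2'$ forces silence on the rewired edges, so no divergence can ever be triggered. Your write-up is, if anything, more explicit than the paper's about the port-preservation point and the mutual-silence bookkeeping, but it is the same argument.
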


\begin{proof}
We show that in both executions each vertex sends and receives identical 
sequences of messages in each communication round of the protocol. 
Note that at each round the messages sent by some vertex $x$ are dependent 
on $x$'s topological view (neighbors of $x$), $x$'s initial input 
(its identity and the indicator variables of the edges incident to $x$), and 
the set of messages sent and received by $x$ in previous communication rounds.
Denote by $EX^2$ and  $EX^X$ executions $EX(\Pi, G^2, T^2)$ and  
$EX(\Pi, G^X, T^2)$ respectively. Note that any vertex 
$x\in V^2\sminus\{u,v,z',w'\}$ has identical topological view and 
identical initial input in both executions. Vertex $u$ has identical 
initial input and  identical number of neighbors in both  executions. 
Although the communication link connecting $u$ to $v$ in $G^2$ connects 
$u$ to $w'$ in $G^X$, vertex $u$ is initially unaware of this difference 
between the executions since it does not know the identifiers of its neighbors.
(The same holds for vertices $v$, $z'$ and $v'$.) 
The proof is by induction on $r$, 
the number of communication rounds of protocol $\Pi$.

\noindent\textbf{Induction base:} For $r=0$. In the first communication round, 
the messages sent by each vertex depend solely on its topological view 
and initial input. Let us analyze the sequence of messages sent by vertices 
in $V$ (the vertices of graphs $G^2$ and $G^X$ that belong to the first copy 
of $G$). Following are the possible cases.

\begin{itemize}
\item Vertex $x\notin \{u,v\}$: Vertex $x$ has identical topological view 
and identical initial input in both execution, thus it sends identical 
sequences of messages in the first round of both executions.
\item
Vertex $u$: As mentioned above, although in execution $EX^X$ vertex $u$ 
is connected to $w'$ instead of $v$, it has no knowledge of this difference. 
Thus $u$  sends identical sequences of messages over each of its communication 
links. The fact that no messages are sent over edge $e$ in execution $EX^2$, 
implies that in execution $EX^X$ no message is sent by $u$ to its 
neighbor $w'$. Thus, $u$ sends identical sequences of messages in the 
first communication round of both executions.
\item
Vertex $v$: can be analyzed in the same manner as vertex $u$.
\end{itemize}
\noindent
The above shows that  vertices in $V$ send the same sequence of messages 
in the first communication round of both executions. The induction base claim 
follows by applying the same argument on vertices of $V'$.

\noindent\textbf{Inductive step:} Can be shown using a similar case analysis as 
in the induction base.
\end{proof}

Theorem \ref{thm:lower-bound-msgs} follows as a consequence of the following 
Lemma.
\begin{lemma}
\label{clm_msg_lb}
Execution 
$EX(\Pi, G^2,T^2)$ requires $\Omega (|E^2\smallsetminus T^2|)$ messages.
\end{lemma}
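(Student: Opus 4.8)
The plan is to prove this lower bound by contradiction, leveraging Lemma~\ref{lem_similarity} together with the correctness requirement of any ST verification protocol. Suppose, towards contradiction, that the execution $EX(\Pi, G^2, T^2)$ sends fewer than $|E^2 \smallsetminus T^2|$ messages. Then there must exist at least one non-tree edge of $E^2$ over which no message is sent during this execution. The goal is to find \emph{two} such ``silent'' non-tree edges, one residing in the first copy $G$ and one in the second copy $G'$, so that Lemma~\ref{lem_similarity} can be applied with these two edges playing the roles of $e_1$ and $e_2'$.

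First I would argue that silent edges exist in \emph{both} copies. Since $G^2$ is the disjoint union of two identical copies $G$ and $G'$ with a corresponding copy $T$ of the spanning tree in each, the execution restricted to $G$ and the execution restricted to $G'$ are themselves copies of one another (identical up to the relabeling of identifiers); more to the point, because the two components are disconnected in $G^2$, no message ever crosses between them, and the message count is simply the sum over the two components. If the total message count is below $|E^2 \smallsetminus T^2| = 2\,|E \smallsetminus E(T)|$, then at least one component, say the copy $G$, has fewer than $|E \smallsetminus E(T)|$ messages and hence contains a silent non-tree edge $e_1 \in E \smallsetminus E(T)$; a symmetric argument (or the symmetry of the two copies) yields a silent non-tree edge $e_2' \in E' \smallsetminus E'(T)$ in the other copy.

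With $e_1$ and $e_2'$ in hand, I would invoke Lemma~\ref{lem_similarity} to conclude that the executions $EX(\Pi, G^2, T^2)$ and $EX(\Pi, G^X, T^2)$ are similar, i.e.\ have identical message histories, and in particular every vertex reaches the same final state and produces the same output in both. Now I exploit the crucial structural difference between the two inputs: $G^2$ is disconnected, so $T^2$ is a \emph{legal} spanning forest / spanning tree candidate that the protocol may treat as it wishes, but the cross-wired graph $G^X$ is \emph{connected} (the rewiring $\{(u,w'),(v,z')\}$ bridges the two copies), and on $G^X$ the candidate $T^2$ is \emph{not} a spanning tree of $G^X$ — it is a spanning forest with two components, missing the connecting edges. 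A correct ST verification protocol must therefore reject $T^2$ on $G^X$, outputting $0$ at every vertex. The contradiction comes from combining this with the similarity: the outputs on $G^X$ must match the outputs on $G^2$. The main obstacle, and the point requiring the most care, is pinning down exactly what ``correctness'' forces on the $G^2$ side, since $G^2$ is not a valid (connected) input. I would handle this by also comparing with the genuinely valid single-copy executions $EX(\Pi, G, T)$ and $EX(\Pi, G', T)$: on each connected copy, $T$ \emph{is} a spanning tree, so $\Pi$ must accept (output $1$) there, and by the disconnectedness of $G^2$ the behavior of each vertex in $EX(\Pi, G^2, T^2)$ coincides with its behavior in the corresponding single-copy valid execution (no cross-component messages). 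Hence every vertex outputs $1$ in $EX(\Pi, G^2, T^2)$, these outputs propagate unchanged to $EX(\Pi, G^X, T^2)$ by similarity, yet correctness on the connected input $G^X$ demands a $0$ output — a contradiction. Therefore $EX(\Pi, G^2, T^2)$ must send at least $|E^2 \smallsetminus T^2|$ messages, and Theorem~\ref{thm:lower-bound-msgs} follows since $|E^2 \smallsetminus T^2| = \Omega(m)$.
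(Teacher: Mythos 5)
Your overall strategy mirrors the paper's proof: locate silent non-tree edges in both copies, apply Lemma~\ref{lem_similarity} to conclude that $EX(\Pi,G^2,T^2)$ and $EX(\Pi,G^X,T^2)$ are similar, and derive a contradiction between the outputs forced by correctness on the valid single-copy inputs (acceptance) and by correctness on the connected graph $G^X$, where $T^2$ is not a spanning tree (rejection). That second half of your argument is sound and is essentially the paper's, just run in the opposite direction: you push the ``$1$'' outputs from $G^2$ to $G^X$, while the paper pushes the ``$0$'' outputs from $G^X$ back to the single copy $G$.

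The gap is in the first step, where you extract the two silent edges. From your hypothesis that fewer than $|E^2\smallsetminus T^2| = 2\,|E\smallsetminus E(T)|$ messages are sent in total, the pigeonhole over the two components gives only \emph{one} component with fewer than $|E\smallsetminus E(T)|$ messages, hence a silent non-tree edge in that one copy; the other copy may send a message over every one of its non-tree edges. (For instance, copy $G$ could send exactly one message on each of its $|E\smallsetminus E(T)|$ non-tree edges while copy $G'$ has one silent edge: the total is $2\,|E\smallsetminus E(T)|-1$, satisfying your hypothesis, yet only $G'$ contains a silent edge.) Your fallback, ``the symmetry of the two copies,'' does not hold: the two copies carry disjoint sets of vertex identifiers, and a protocol's behavior may depend arbitrarily on the identifiers, so the executions restricted to $G$ and to $G'$ need not be isomorphic and their message counts can differ; the construction hides neighbors' identifiers from each vertex, but every vertex still knows its own. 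The paper closes exactly this hole by assuming, towards contradiction, that fewer than $|E^2\smallsetminus T^2|/2$ messages are sent over edges of $E^2\smallsetminus T^2$: then more than $|E\smallsetminus E(T)|$ of the $2\,|E\smallsetminus E(T)|$ non-tree edges are silent, and since each copy contains exactly $|E\smallsetminus E(T)|$ of them, \emph{both} copies must contain a silent one. This halved threshold still yields the claimed $\Omega(|E^2\smallsetminus T^2|)$ bound, so your proof is repaired simply by weakening your contradiction hypothesis by a factor of $2$ (and counting messages on non-tree edges); as written, however, the step producing $e_1$ and $e_2'$ fails.
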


\begin{proof}
Assume, towards contradiction, that there exists a protocol $\Pi$ 
that correctly solves the ST verification problem for every graph $G$ and 
ST candidate $T$, such that execution $EX(\Pi, G, T)$ sends fewer than
$|E\sminus T|/2$ messages over edges from $E \sminus T$.

For the rest of the proof we fix $G=(V,E)$ to be an arbitrary connected graph 
and denote the ST candidate by $T$.
We take $T$ to be a spanning tree and not just any subgraph.
(See Figure \ref{f:Graph_G_MST}).
 
Consider the graph $G^2$ as previously defined with ST candidate 
$T^2=\{e=(x,y)\in T\}\cup \{e'=(x',y') \mid e=(x,y)\in T\}$ 
(See Figure \ref{f:Graph_G2_MST}).  

Then by the assumption on $\Pi$, 
execution $EX^2=EX(\Pi, G^2, T^2)$ sends fewer than 
$|E^2\sminus T^2|/2$ messages over edges from $E^2\sminus T^2$. 
Hence there exist $e_1=(u,v)$ and $e'_2=(w',z')$ such that 
$e_1,e'_2\in E^2\sminus T^2$ and no message is sent over $e_1$ and $e'_2$ 
in execution $EX^2$. Consider the graph $G^X$ with ST candidate $T^2$ 
as previously defined (See Figure \ref{f:Graph_G2_e_MST}).

By Lemma \ref{lem_similarity}, executions $EX^2$ and $EX^X=EX(\Pi, G^X, T^2)$ 
are similar. Note that $e_1,e'_2 \notin T^2$, 
thus $T^2$ is not a spanning tree of $G^X$ (since the two copies of $G$   
contained in $G^X$ are connected solely by edges $e_1$ and $e'_2$). 
Since $\Pi$  correctly solves the ST verification problem, the output 
of all vertices in $EX^X$ is ``0'' (i.e., the given ST candidate $T^2$ is not 
a spanning tree of the graph $G^X$). 

On the other hand, consider the execution 
$EX=(\Pi,G, T)$ with ST candidate $T$. Note that $EX$ is exactly the 
restriction of $EX^2$ on the first copy of $G$ contained in $G^2$. 
Since $G^2$ contains two disconnected copies of $G$ the output of all vertices 
in execution $EX^2$ will be identical to the output of the same vertices 
in $EX$ (since in both executions the vertices have identical topological view 
and the input variables contain identical values). 
Since executions $EX^2$ and $EX^X$ are similar, the output of $EX$ is ``0'', 
in contradiction to the correctness of $\Pi$.
\end{proof}

\clearpage

\end{document}